\title{Global exponential attitude tracking for spacecraft with gyro bias estimation}
\author{ \hspace{1mm}Eduardo Esp\'indola-L\'opez  \\
	Engineering Faculty\\
	National Autonomous University of Mexico\\
	Mexico City, MEXICO 3000\\
	\texttt{eespindola@comunidad.unam.mx} \\
	%% examples of more authors
	\And
	\hspace{1mm}Yu Tang \thanks{Corresponding author.} \\
	Engineering Faculty\\
	National Autonomous University of Mexico\\
	Mexico City, MEXICO 3000\\
	\texttt{tang@unam.mx} \\
	%% \AND
	%% Coauthor \\
	%% Affiliation \\
	%% Address \\
	%% \texttt{email} \\
	%% \And
	%% Coauthor \\
	%% Affiliation \\
	%% Address \\
	%% \texttt{email} \\
	%% \And
	%% Coauthor \\
	%% Affiliation \\
	%% Address \\
	%% \texttt{email} \\
}
\newtheorem{theorem}{Theorem}
\newtheorem{lemma}[theorem]{Lemma}
\theoremstyle{remark}
\newtheorem*{rmk}{Remark}
\begin{document}
\maketitle

\begin{abstract}
	%\lipsum[1]
This paper addresses the global exponential attitude tracking of a spacecraft when gyro measurements are corrupted by bias. Based on contraction analysis,  an exponentially convergent nonlinear observer is designed first to estimate the gyro bias. Relying on this bias estimator and the quaternion logarithm representation of the tracking error, an exponentially globally convergent controller is devised. This controller stabilizes the  unique equilibrium of the closed-loop system, where the tracking error is the unit quaternion. For more energy-efficiency and enhancing the robustness in the presence of measurement noise,  a hysteretically switching variable as in \cite{mayhew2011quaternion} is incorporated in the control loop and an unwinding-free globally exponentially convergent tracking controller is obtained. Numeric simulations were done to evaluate its performance in terms of tracking errors and energy-efficiency, as well as the robustness  to measurement noise and time-varying bias in gyro sensors.
\end{abstract}

% keywords can be removed
\keywords{Attitude tracking \and Contraction analysis \and Gyro bias observer \and Unwinding-free \and Spacecraft}

\section{Introduction}
%\lipsum[2]
%\lipsum[3]
In spacecraft applications such as surveillance mapping, communication, deep space data acquisition and formation flying, accurate attitude control must be ensured. This topic has been studied extensively in the literature \cite{wen1991attitude,tayebi2008unit,mayhew2011quaternion,su2011globally,schlanbusch2012hybrid,gui2016global}. One major challenge for attitude control designs is the nonlinear relationship between the attitude representation and the angular velocity in the rotational kinematics, regardless the representation chosen to parameterize the attitude. Among commonly used attitude representations, unit quaternions are often preferred for being singularity-free compared with any three-parameter attitude representations (Euler angels, Rodriguez parameters and modified Rodriguez parameters) and easier to maintain its norm constraint than in a rotational matrix \cite{shuster1993survey}.  

Nevertheless, unit quaternions have an ambiguity: two unit quaternions ($\pm q$) correspond to the same rotation matrix and therefore the same physical attitude. This fact brings mainly two obstacles for controller designs: firstly, in a quaternion based control system there are two equilibria with the same desired attitude, stabilizing one of them would destabilize the other. Therefore, for a specific attitude trajectory arbitrarily close to the "unstable" equilibrium, the spacecraft will develop a full unnecessary rotation, causing the unwinding phenomenon. Secondly, achieving a global result is challenging since it requires to break the topological constraint and any continuous controller cannot attain this task \cite{bhat2000topological}.

In the past decades the unwinding phenomenon and global stabilizing using quaternions have been widely studied \cite{li2010global,mayhew2011quaternion,su2011globally,gui2016global}. A PID hybrid controller for global attitude tracking was designed in \cite{su2011globally}, global set stabilization using an optimal attitude controller can be found in \cite{li2010global}. Taking into practical issues such as available measurements in a low-cost application,  \cite{mayhew2011quaternion,gui2016global} considered the problem of designing a globally convergent unwinding-free tracking controller addressing gyro bias and noise in sensors.

Other ways for dealing with the angular velocity noise, bias and scaling factors are using nonlinear observers or filters \cite{salcudean1991globally,mahony2008nonlinear,martin2010design,chavez2018contracting}. \cite{mahony2008nonlinear} proposed non-linear filters to estimate the attitude and gyro bias from a low-cost IMU, \cite{martin2010design} proved the validity of this sort of nonlinear observers through a low-cost hardware implementation. Since the separation principle is not held for general nonlinear systems, incorporating an  observer into the closed loop control needs to redesign the observer or to establish the validity of the separation principle for a specific design. In this regards \cite{caccavale1999output,costic2001quaternion,wong2001adaptive,tayebi2008unit,zou2013finite} showed asymptotic stability for the proposed control using only attitude measurements. Furthermore, gyro bias correction was addressed in  \cite{thienel2003coupled,mayhew2011quaternion} to design an attitude feedback controller with asymptotic convergence.

Needless to say, energy-efficient control design is a critical issue for spacecraft applications. For asymptotic convergent  controller designs it has been noted  that some are more energy-efficient than others, being the main difference the way how to reach the desired attitude trajectory from an initial condition, determined by the tools used in the controller development. Passivity based designs \cite{egeland1994passivity,lizarralde1996attitude} have the main feature of modularity which simplifies significantly the overall design by devising   the controller, observer or adaptation into each functional module. However asymptotic convergence is commonly obtained. Lyapunov stability based designs have achieved a  stronger exponential convergence \cite{lee2012exponential,lee2015global,liu2017robust,xiao2019attitude}. But searching an "appropriate" Lyapunov function is not a trivial problem. Mostly,  auxiliary tools such as La'Salle invariance principle or Barbalat's Lemma must be invoked in order to reach the asymptotic result. 

Contraction analysis  \cite{lohmiller1998contraction,pavlov2006uniform,simpson2014contraction}, studied earlier in the mathematics literature (see \cite{jouffroy2005some} for a historical view) has been emerged as powerful alternative tools to design exponentially convergent observers and controllers.  It is closely related with the incremental  Lyapunov stability \cite{angeli2002lyapunov,forni2013differential}.  Contraction can be differentiated from Lyapunov stability by the convergence notion, that is, while Lyapunov analysis leads to the convergence to an equilibrium, contraction  analysis enables to conclude the convergence of any couple of trajectories \cite{lohmiller1998contraction,jouffroy2004methodological}. Further, the convergence of specific properties can be studied through the \textit{partial contraction} which is an extension of the contraction \cite{wang2005partial,russo2011symmetries} and provides a general framework to study the stability of nonlinear systems. Using such a  concept, contraction based  design consists mainly in two steps: first a "virtual system" is proposed verifying that the trajectories of interest, e.g., the target trajectory and that of the actual system, are its particular solutions; subsequently in the second step, contraction of the virtual system is shown ensuring that the distance, measured in an appropriate metric,  between the target trajectory and the actual trajectory decreases exponentially (contracting). More details can be found in \cite{lohmiller1998contraction,wang2005partial,jouffroy2004methodological}. 

Under piece-wise smooth (PWS) controls, the resulting closed-loop system  is also PWS and its solutions are defined in the sense of Filippov and are right-uniqueness \cite{goebel2009hybrid}.  Contraction analysis has been  extended to PWS systems in \cite{slotine2004contraction,rifai2006compositional,di2014contraction,fiore2016contraction}. 

By exploring  and leveraging the salient features provided by  contraction analysis, this paper considers the problem of design a global exponentially convergent attitude tracking controller,  addressing issues such as gyro bias as well as the topological constraint in attitude control with quaternion parameterization. 
The main contributions are stated as follows: first, a nonlinear gyro bias observer with global exponential convergence is designed based on quaternion kinematics. Second, relying on this bias observer and the quaternion logarithm of the tracking error an attitude tracking controller with global exponential convergence is developed. Third, a global exponential unwinding-free version of the previous controller is developed, where a hysteretically switching variable motivated by the hybrid control in \cite{mayhew2011quaternion} is incorporated into the controller.  

The rest of paper consists in seven sections. Section II gives the preliminaries, including contraction tools tailored to the subsequent designs, the rotational kinematics and dynamics of a spacecraft, the control objectives and the representation of the gyro measurements. In Section III the gyro bias observer is designed based on the spacecraft kinematics. Section IV develops the attitude tracking controller with the gyro bias correction.  Section  V devises   the global exponentially convergent switching controller. In Section  VI  numerical simulations were shown to evaluate the performance of the proposed controller in terms of tracking error and energy-efficiency, as well as the robustness in the presence of measurement noise and time-varying bias in the gyro sensor.  Finally, conclusions are given in Section VII.

\section{Problem Formulation and Definitions}

\subsection{Summary of Contraction Tools}
Using differential analysis, contraction analysis focuses on the convergence between any pair of trajectories of a given system (for a recent formal tutorial the reader is refereed to \cite{simpson2014contraction}). More precisely, consider the dynamical system 
\begin{equation} \label{eq:NonLS}
\dot{x}=f(x,t), \quad x(t_{0})=x_{0}, \quad \forall t\in \mathbb{R}_+,
\end{equation}
where the vector field $f:\mathbb{R}^n\times \mathbb{R}_+ \rightarrow \mathbb{R}^n$ is continuously  differentiable. 
%i.e., $f\in C^1(\mathbb{R}^+\times \mathbb{R}^n)$. 
Completeness of $f$ is assumed, i.e., for an initial condition $x_0$, the solution $\phi(x_0,t)$ exists $\forall t\geq t_0$.  Let $\mathcal X\subseteq \mathbb{R}^n$ a set in $\mathbb{R}^n$. The system \eqref{eq:NonLS} is said to be {\it contracting} in $\mathcal X$ with respect to the metric $\mathcal M={\Theta}^T{\Theta}$, where $\Theta$ is an $n\times n$ invertible   matrix, if there exists some $\lambda>0$ such that $\forall x\in \mathcal X$ and $\forall t\geq t_0$
\begin{equation} \label{eq:RegContrac}
 J^{T}(x,t)\mathcal{M} + \mathcal{M}J(x,t) \leq -2\lambda \mathcal{M},
\end{equation}
i.e., the symmetric part of the generalized Jacobian $J_G(x,t)$ $=$ ${\Theta}J (x,t){\Theta}^{-1}\leq -\lambda I$, $\forall x\in \mathcal X$, uniformly, where $I$ is the identity matrix of appropriate dimension and $J(x,t)$ $=$ $\frac{\partial f}{\partial x}$ $(x,t)$ is the Jacobian of \eqref{eq:NonLS}. 
%\footnote{In the sequel, a  square matrix is said to be negative definite if its symmetric part in the sense of \eqref{eq:RegContrac} is negative definite.}.
$\mathcal X$ is called the {\it contraction region} and $\lambda$ is the {\it contraction rate}. If $\mathcal X=\mathbb{R}^n$, the contraction is global. Note that $\mathcal X$ defined in this way is convex and forward invariant for the system \eqref{eq:NonLS}.
%\footnote{The set $\mathcal X$ is said to be forward invariant for the system \eqref{eq:NonLS} if $x_0\in \mathcal X$ implies $\phi(x_0,t)\in \mathcal X, \ \forall t\geq t_0$. }. 

In fact, consider the differential dynamics of \eqref{eq:NonLS}
\begin{equation} \label{eq:VarDin}
\delta \dot{x}=J(x,t)\delta x.
\end{equation}     
Let $V:=\delta x^{T}\mathcal{M}\delta x$ be the squared distance under the metric $\mathcal{M}$  between any pair of trajectories in $\mathcal X$.  The  time derivative of $V$ is 
\begin{equation*}
\dot{V} =\delta x^{T} ( J^{T}(x,t)\mathcal{M}  + \mathcal{M}J(x,t) ) \delta x. 
\end{equation*}
If the system \eqref{eq:NonLS} is contracting, by \eqref{eq:RegContrac}
\begin{equation}
\dot{V}(t) \leq -2\lambda V(t),
 \label{eq:TimeEvol}
\end{equation}
therefore $V(t)\leq V( t_0 ) e^{-2\lambda (t-t_0)}\; \forall  t\geq t_0$ and the exponential convergence of $\delta x(t)$ to zero    follows. Theorem \ref{thm1}  \cite{lohmiller1998contraction,wang2005partial} below  formalizes this fact. A concise formal  proof for a general state-dependent metric can be found in  \cite{aghannan2003intrinsic,simpson2014contraction}.

\begin{theorem}\label{thm1}
\emph{\textbf{(Contraction):}}
Consider the system \eqref{eq:NonLS}. Assume  the flow associate to $f$ to be forward complete. Then under the condition \eqref{eq:RegContrac} any pair of solutions of \eqref{eq:NonLS} $x(t)=\phi(x_0,t)$ and $y(t)$ $=$ $\phi$ $(y_0$ $,$ $t)$ with initial conditions $x_0, \ y_0\in \mathcal X$ will remain in $\mathcal X$ and 
\begin{equation} \label{eq:convergence}
    \| x(t)-y(t)\|\leq \| x_0-y_0\|e^{-\lambda (t-t_0)}, \ \forall t\geq t_0.
\end{equation}
\end{theorem}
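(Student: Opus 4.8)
The plan is to lift the infinitesimal contraction estimate already derived in \eqref{eq:VarDin}--\eqref{eq:TimeEvol} to a bound on the finite distance between the two genuine solutions $x(t)$ and $y(t)$. The computation preceding the statement shows that any solution $\delta x(t)$ of the variational dynamics \eqref{eq:VarDin} satisfies $\delta x^T(t)\mathcal{M}\delta x(t)\leq\delta x^T(t_0)\mathcal{M}\delta x(t_0)\,e^{-2\lambda(t-t_0)}$; equivalently, writing $\|\Theta\delta x\|$ for the length of $\delta x$ in the constant metric $\mathcal{M}=\Theta^T\Theta$, taking square roots gives the pointwise inequality $\frac{d}{dt}\|\Theta\delta x(t)\|\leq-\lambda\|\Theta\delta x(t)\|$. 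The remaining work is to turn this statement about a single infinitesimal displacement into a statement about the gap between two trajectories whose initial conditions are a finite distance apart.

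First I would connect $x_0$ and $y_0$ by the straight segment $\gamma(s)=x_0+s(y_0-x_0)$, $s\in[0,1]$, which lies entirely in $\mathcal{X}$ because $\mathcal{X}$ is convex. Let $\psi(s,t):=\phi(\gamma(s),t)$ denote the solution of \eqref{eq:NonLS} issued from $\gamma(s)$, so that $\psi(0,t)=x(t)$ and $\psi(1,t)=y(t)$. Since $f$ is $C^1$, the flow $\phi$ is $C^1$ in its initial condition, so $\psi$ is jointly smooth and the tangent field $w(s,t):=\partial_s\psi(s,t)$ is well defined. Differentiating $\partial_t\psi=f(\psi,t)$ with respect to $s$ and interchanging the derivatives gives $\partial_t w=J(\psi,t)\,w$, i.e. for each fixed $s$ the vector $w(s,\cdot)$ solves the variational equation \eqref{eq:VarDin} along the trajectory $\psi(s,\cdot)$.

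Next I would estimate the length of the curve $s\mapsto\psi(s,t)$ in the metric $\mathcal{M}$, namely $L(t):=\int_0^1\|\Theta\,w(s,t)\|\,ds$. Applying the pointwise infinitesimal contraction inequality to $w(s,\cdot)$ for each fixed $s$ and integrating over $s$ yields $\dot L(t)\leq-\lambda L(t)$, hence $L(t)\leq L(t_0)\,e^{-\lambda(t-t_0)}$. Since $\mathcal{M}$ is a constant metric, the straight-line distance $\|\Theta(x(t)-y(t))\|$ between the endpoints $\psi(0,t)=x(t)$ and $\psi(1,t)=y(t)$ is no larger than the length $L(t)$ of the curve joining them, while $L(t_0)=\|\Theta(y_0-x_0)\|$ because $w(s,t_0)=y_0-x_0$ is independent of $s$. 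Hence $\|\Theta(x(t)-y(t))\|\leq\|\Theta(x_0-y_0)\|\,e^{-\lambda(t-t_0)}$, and passing to the Euclidean norm via the eigenvalue bounds $\sqrt{\lambda_{\min}(\mathcal{M})}\,\|v\|\leq\|\Theta v\|\leq\sqrt{\lambda_{\max}(\mathcal{M})}\,\|v\|$ gives \eqref{eq:convergence} up to the condition-number factor $\sqrt{\lambda_{\max}(\mathcal{M})/\lambda_{\min}(\mathcal{M})}$, which equals one precisely when $\mathcal{M}=I$.

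Finally, forward invariance of $\mathcal{X}$ -- that $x(t)$, $y(t)$, and in fact the whole curve $\psi(\cdot,t)$ remain in $\mathcal{X}$ -- must be argued rather than assumed, and this is the step I expect to be the main obstacle, since the contraction inequality \eqref{eq:RegContrac} is only posited to hold on $\mathcal{X}$ and so the length estimate above is valid only while the curve has not left $\mathcal{X}$. The cleanest route is to combine the convexity of $\mathcal{X}$ with the contraction property itself: because metric distances between trajectories inside $\mathcal{X}$ cannot grow, a continuity/maximality argument shows that a trajectory starting in $\mathcal{X}$ cannot reach its boundary while the estimate is in force, so $\mathcal{X}$ is forward invariant and the bound persists for all $t\geq t_0$. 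Together with the forward completeness assumed in the statement, this closes the argument.
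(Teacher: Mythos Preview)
Your argument is correct and follows the standard path-integral route (connect initial conditions by a segment in the convex region, propagate it by the flow, and bound the evolution of its Riemannian length via the variational estimate). This is essentially the argument the paper has in mind: the text preceding Theorem~\ref{thm1} derives the infinitesimal bound $\dot V\leq-2\lambda V$ for $V=\delta x^T\mathcal{M}\delta x$ and then, rather than completing the finite-distance step you carry out, simply refers the reader to \cite{aghannan2003intrinsic,simpson2014contraction} for the formal proof. So your approach is the same in spirit but strictly more detailed than what the paper itself provides.

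Two remarks. First, your observation that the Euclidean bound \eqref{eq:convergence} acquires a condition-number factor $\sqrt{\lambda_{\max}(\mathcal{M})/\lambda_{\min}(\mathcal{M})}$ unless $\mathcal{M}=I$ is accurate; the paper's statement is tacitly in the $\mathcal{M}$-norm (or else slightly imprecise), a common elision in this literature. Second, you are right to flag forward invariance of $\mathcal{X}$ as the delicate point: the paper simply asserts it (``$\mathcal{X}$ defined in this way is convex and forward invariant''), and in full generality one needs an additional hypothesis---e.g.\ that $\mathcal{X}$ is closed and contains a particular trajectory, or is all of $\mathbb{R}^n$---to make your continuity/maximality sketch rigorous. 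In the applications of the paper one has $\mathcal{X}=\mathbb{R}^n$, so the issue does not arise.
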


To facilitate contraction based designs,  partial contraction is introduced \cite{wang2005partial}. The system \eqref{eq:NonLS} is said to be of {\it partial contraction} if its trajectories converge to the trajectories of a contracting system. This is made clear in the following theorem.

\begin{theorem}\label{thm2}
\emph{\textbf{(Partial contraction):}} 
Consider the following auxiliary system, termed virtual system
\begin{equation}\label{eq:VirtualSyst}
\dot{\xi}=\bar{f}\left( \xi ,x,t\right),
\end{equation}
associated with the nonlinear system \eqref{eq:NonLS} through $\bar{f}\left( x ,x,t\right)$ $=$ $f(x,t)$. Suppose the virtual system \eqref{eq:VirtualSyst} is contracting in $\xi$, $\forall \xi, \ x \in \mathcal X$, $t \geq t_{0}$. Then,  all its particular solutions converge exponentially to each other and in particular $(\xi - x)\to 0$ exponentially from any initial condition in $\mathcal X$. The system \eqref{eq:NonLS} is said to be {\it partially contracting}.
\end{theorem}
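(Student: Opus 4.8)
The plan is to reduce the statement to a direct application of Theorem~\ref{thm1}. The central observation is that once a solution $x(t)=\phi(x_0,t)$ of the actual system \eqref{eq:NonLS} is fixed, the signal $x(t)$ enters the virtual system \eqref{eq:VirtualSyst} merely as an exogenous, time-varying input. Hence \eqref{eq:VirtualSyst} can be rewritten as a genuine non-autonomous system in the single variable $\xi$, namely $\dot{\xi}=g(\xi,t)$ with $g(\xi,t):=\bar{f}(\xi,x(t),t)$. The contraction hypothesis, being stated ``in $\xi$'', is precisely the statement that the \emph{partial} Jacobian $\partial\bar{f}/\partial\xi$ (with the second argument frozen) satisfies the contraction inequality \eqref{eq:RegContrac} on $\mathcal X$, so Theorem~\ref{thm1} applies verbatim to $\dot{\xi}=g(\xi,t)$.

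First I would verify that $x(t)$ is itself a particular solution of the virtual system. Setting $\xi(t)=x(t)$ and using the consistency condition $\bar{f}(x,x,t)=f(x,t)$ gives $\dot{\xi}=\dot{x}=f(x,t)=\bar{f}(x,x,t)=\bar{f}(\xi,x,t)$, which is exactly \eqref{eq:VirtualSyst}. Thus $x(t)$ and any other solution $\xi(t)$ launched from $\xi_0\in\mathcal X$ are two particular solutions of the same contracting non-autonomous system $\dot{\xi}=g(\xi,t)$.

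Next I would invoke Theorem~\ref{thm1} with these two solutions. Since the virtual system is contracting on $\mathcal X$ with some rate $\lambda>0$, the convex region $\mathcal X$ is forward invariant, so both solutions remain in $\mathcal X$, and \eqref{eq:convergence} yields $\|\xi(t)-x(t)\|\le\|\xi_0-x_0\|e^{-\lambda(t-t_0)}$ for all $t\ge t_0$. Since $\xi_0\in\mathcal X$ is arbitrary, this establishes that $(\xi-x)\to0$ exponentially from any initial condition in $\mathcal X$; applying the same argument to an arbitrary pair of solutions of the virtual system shows they converge exponentially to each other, which is the asserted partial contraction of \eqref{eq:NonLS}.

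The step I expect to require the most care is the first one: justifying that the contraction of the virtual system ``in $\xi$'' transfers to the reduced system $\dot{\xi}=g(\xi,t)$ in a way that genuinely meets the hypotheses of Theorem~\ref{thm1}. The point to check is that the relevant Jacobian is $\partial\bar{f}/\partial\xi$ alone, evaluated along $x(t)$; because $x(t)\in\mathcal X$ for all $t\ge t_0$ by the forward invariance of $\mathcal X$, inequality \eqref{eq:RegContrac} holds uniformly along the entire time axis, and forward completeness of $\dot{\xi}=g(\xi,t)$ follows since the contracting solutions stay bounded relative to the globally defined trajectory $x(t)$. Once this identification is secured, the exponential bound is an immediate consequence of Theorem~\ref{thm1}, and no regularity beyond the standing $C^1$ and forward-completeness assumptions is needed.
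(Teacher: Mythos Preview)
Your argument is correct and is precisely the standard proof of partial contraction: freeze the actual trajectory $x(t)$ as an exogenous signal, observe that $\xi=x(t)$ is a particular solution of the virtual system by the consistency condition $\bar f(x,x,t)=f(x,t)$, and then apply Theorem~\ref{thm1} to the resulting non-autonomous system in $\xi$.

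Note, however, that the paper does not supply its own proof of Theorem~\ref{thm2}; it is quoted as a known result from \cite{wang2005partial} (with Theorem~\ref{thm1} likewise cited from \cite{lohmiller1998contraction,wang2005partial}). Your write-up matches the argument given in those references, so there is nothing to compare against beyond that.
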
 

Theorem \ref{thm2} enables the user to propose a virtual system with particular solutions the trajectories of the underlined systems (e.g., the trajectory of the target system and the trajectory of the actual system, both  initialized in a contraction region). Then contraction  of the virtual system implies the exponential convergence between these trajectories. 

Given tow contracting systems under possible different metrics, the cascade connection of these systems is also contracting provided that the  connection term is bounded \cite{lohmiller1998contraction,wang2005partial,simpson2014contraction}, as stated by the following theorem. 

\begin{theorem}\label{thm3}
\emph{\textbf{(Contraction of hierarchical systems):}}
Let two systems of possibly different dimensions
\begin{eqnarray} \label{eq:HierarcSys}
\dot{x}_{1} &=& f_{1}\left( x_{1},t \right), \nonumber\\
\dot{x}_{2} &=& f_{2}\left( x_{1},x_{2},t \right).
\end{eqnarray}
Consider the differential dynamics $\left[\delta x^{T}_{1} , \delta x^{T}_{2}\right]^T$, arranged as
\begin{equation}\label{eq:HierarcSys2}
\frac{d}{dt}
\left[
\begin{array}{c}
\delta x_{1} \\
\delta x_{2}
\end{array}
\right]
=\left[
\begin{array}{cc}
F_{1} & 0 \\
F_{21} & F_{2}
\end{array}
\right]
\left[
\begin{array}{c}
\delta x_{1} \\
\delta x_{2}
\end{array} 
\right] .
\end{equation}
If in some region of the state space $F_{1} := \frac{\partial f_{1}}{\partial x_{1}}$ and $F_{2} := \frac{\partial f_2}{\partial x_2}$ are uniformly negative definite, and $F_{21} := \frac{\partial f_2}{\partial x_1}$ is bounded, then the whole system \eqref{eq:HierarcSys} will be contracting in that region.
\end{theorem}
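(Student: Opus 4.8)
The plan is to exhibit a constant metric under which the generalized Jacobian of the full system \eqref{eq:HierarcSys2} has a uniformly negative definite symmetric part, which by the definition \eqref{eq:RegContrac} is exactly contraction. The block lower-triangular structure of the Jacobian in \eqref{eq:HierarcSys2} is what makes this possible: the coupling enters only through the single off-diagonal block $F_{21}$, and a diagonal scaling of the metric can attenuate that block without disturbing the two diagonal blocks.

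Concretely, first I would fix the constants $\lambda_1,\lambda_2>0$ and $b>0$ supplied by the hypotheses, so that $\tfrac12(F_1+F_1^{T})\le-\lambda_1 I$, $\tfrac12(F_2+F_2^{T})\le-\lambda_2 I$, and $\|F_{21}\|\le b$ hold uniformly on the region. Then I would take the constant metric $\mathcal M=\Theta^{T}\Theta$ with $\Theta=\mathrm{diag}(I,cI)$ for a scaling $c>0$ to be selected later, and compute the generalized Jacobian $J_G=\Theta J\Theta^{-1}=\left[\begin{smallmatrix}F_1&0\\ cF_{21}&F_2\end{smallmatrix}\right]$: the diagonal blocks are left untouched while the off-diagonal block is multiplied by $c$.

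Next I would bound the quadratic form of its symmetric part. Testing $\tfrac12(J_G+J_G^{T})$ against an arbitrary $[u^{T},v^{T}]^{T}$ gives $u^{T}\mathrm{sym}(F_1)u+c\,v^{T}F_{21}u+v^{T}\mathrm{sym}(F_2)v\le-\lambda_1\|u\|^2-\lambda_2\|v\|^2+cb\|u\|\|v\|$. The right-hand side is a quadratic form in $(\|u\|,\|v\|)$ whose associated $2\times2$ matrix $\left[\begin{smallmatrix}-\lambda_1 & cb/2\\ cb/2 & -\lambda_2\end{smallmatrix}\right]$ is negative definite precisely when $\lambda_1\lambda_2>(cb/2)^2$, i.e. whenever $c<2\sqrt{\lambda_1\lambda_2}/b$. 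Fixing any such $c$ yields $\mathrm{sym}(J_G)\le-\lambda I$ for some $\lambda>0$, so \eqref{eq:RegContrac} holds for the full system \eqref{eq:HierarcSys} under the chosen constant metric, establishing contraction on the region.

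The step I expect to be the crux is recognizing that the triangular coupling is essential and that a \emph{single} scaling $c$ works uniformly: it is boundedness of $F_{21}$ (rather than mere pointwise finiteness) that guarantees one $c$ suffices for all states and all $t$, and the one-directional coupling is what lets the off-diagonal term be absorbed after symmetrization, something a fully coupled Jacobian would not in general permit. As a sanity check one can read the conclusion through the cascade viewpoint: $\delta x_1\to0$ exponentially by contraction of the $x_1$-subsystem, after which $\delta\dot x_2=F_2\,\delta x_2+F_{21}\,\delta x_1$ is a contracting system driven by an exponentially decaying, bounded input, so $\delta x_2\to0$ exponentially as well.
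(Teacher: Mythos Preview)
Your argument is correct: the diagonal scaling $\Theta=\mathrm{diag}(I,cI)$ leaves the contracting diagonal blocks untouched while shrinking the sole coupling block to $cF_{21}$, and the Schur-type $2\times2$ estimate then yields uniform negative definiteness of $\mathrm{sym}(J_G)$ for any $c<2\sqrt{\lambda_1\lambda_2}/b$. The cascade sanity check at the end is also a valid alternative route.

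Note, however, that the paper does \emph{not} supply its own proof of Theorem~\ref{thm3}: it is quoted in Section~II.A as a standard preliminary from the contraction literature (Lohmiller--Slotine, Wang--Slotine, Simpson-Porco--Bullo), alongside Theorems~\ref{thm1}, \ref{thm2} and \ref{thm4}, all stated without proof. Your proof is precisely the classical argument given in those references (e.g.\ the hierarchical-combination result in \cite{lohmiller1998contraction}), so there is no discrepancy to report---you have reproduced the intended proof that the paper merely cites.
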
 

System \eqref{eq:NonLS} may also represent the closed-loop dynamics of a controlled system with state feedback $u(x, t)$. For piece-wise smooth (PWS) $u(x,t)$, \eqref{eq:NonLS} holds with the right derivative at points of discontinuity, and its solutions are defined in the sense of Filippov and are right-uniqueness  \cite{lohmiller2000nonlinear,di2016switching}. Contraction analysis was extended to PWS systems in \cite{lohmiller2000nonlinear,di2014contraction,fiore2016contraction}. In particular, for continuous-time switching systems \footnote{With a little abuse of notation, $h$ is used here to identify two different continuous individual systems for $h \in \lbrace 1, -1\rbrace$.} \cite{goebel2009hybrid}
 \begin{align} \label{eq:hrbrid}
    \dot x(t) &=f_h(x(t),t), \quad h^{+}(t)=g(x(t),h(t),t),
\end{align}   
where $h\in\{-1,1\}$ is the discrete state, $f_h$ is the vector field for the two individual systems, $g$ is a function defining the switching rule, and $x\in \mathbb{R}^{n}$ is the continuous state. The following theorem, adopted  from \cite{slotine2004contraction} ( Theorem 5) and \cite{di2014contraction} ( Theorem 3.2), gives a sufficient conditions for contraction of switching  systems.

\begin{theorem}\label{thm4}
\emph{\textbf{(Contraction of switching systems):}} 
A continuous time switching system \eqref{eq:hrbrid} is contracting if the individual systems $f_h$, $h\in \{-1,\ 1\}$, are contracting with respect to a common metric $\mathcal{M}= \Theta^{T} \Theta$. 
\end{theorem}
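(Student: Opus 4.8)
The plan is to turn the \emph{common} metric $\mathcal M=\Theta^{T}\Theta$ into a single differential Lyapunov function $V:=\delta x^{T}\mathcal M\,\delta x$ that is valid for both vector fields simultaneously, and then to show that $V$ still decays exponentially along any solution of \eqref{eq:hrbrid} despite the switches. First I would make the hypothesis quantitative: contraction of each individual system with respect to $\mathcal M$ means, by \eqref{eq:RegContrac}, that there exist rates $\lambda_{1},\lambda_{-1}>0$ with $J_{h}^{T}\mathcal M+\mathcal M J_{h}\le-2\lambda_{h}\mathcal M$ for $h\in\{-1,1\}$, where $J_{h}=\partial f_{h}/\partial x$. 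Setting $\lambda:=\min\{\lambda_{1},\lambda_{-1}\}>0$ then gives one uniform contraction rate that holds no matter which mode is active.

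Next I would treat the continuous evolution between switches. Since the control is PWS the solution of \eqref{eq:hrbrid} is a Filippov solution and, as noted above \eqref{eq:hrbrid}, $\dot x=f_{h}(x,t)$ holds with the right derivative; the discrete state $h$ is piecewise constant, and the hysteretic guard $g$ keeps the switching times isolated (a positive dwell between switches, ruling out Zeno behaviour). On every maximal interval on which $h$ is constant the variational equation reduces to \eqref{eq:VarDin} with $J=J_{h}$, so exactly as in the computation preceding Theorem \ref{thm1},
\begin{equation*}
\dot V=\delta x^{T}\bigl(J_{h}^{T}\mathcal M+\mathcal M J_{h}\bigr)\delta x\le-2\lambda\,V,
\end{equation*}
and hence $V$ contracts at the common rate $\lambda$ throughout each mode.

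The decisive step --- and the one I expect to be the main obstacle --- is the behaviour of $V$ across a switching instant. I would lean on the continuous-time nature of \eqref{eq:hrbrid}: the continuous state $x$ does not jump at a switch, only the active field $f_{h}$ is replaced, and because the quadratic form defining $V$ uses the \emph{same} matrix $\mathcal M$ before and after the switch, the value of $V$ along the solution is continuous. The delicate point is that two neighbouring trajectories generically cross the switching surface at slightly different instants, so the infinitesimal displacement picks up a saltation $\delta x^{+}=S\,\delta x^{-}$ determined by the jump of the active field across the guard; guaranteeing that $V$ does not grow then amounts to the non-expansiveness $S^{T}\mathcal M S\le\mathcal M$. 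Establishing this inequality --- using transversality of the crossing together with the fact that the \emph{same} $\mathcal M$ governs both modes, which is precisely where the common-metric hypothesis does its work --- is the main technical hurdle.

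Finally I would assemble the pieces. Combining the per-mode estimate $\dot V\le-2\lambda V$ with the non-increase of $V$ at the isolated switches, a comparison-lemma argument applied to the continuous, piecewise-$C^{1}$ function $V$, whose right Dini derivative satisfies $D^{+}V\le-2\lambda V$ for every $t$, yields $V(t)\le V(t_{0})e^{-2\lambda(t-t_{0})}$. Passing back through $\mathcal M=\Theta^{T}\Theta$ exactly as in the proof of Theorem \ref{thm1} converts this into $\|x(t)-y(t)\|\le\|x_{0}-y_{0}\|e^{-\lambda(t-t_{0})}$, so the switching system \eqref{eq:hrbrid} is contracting with rate $\lambda$, as claimed.
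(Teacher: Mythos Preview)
The paper does not actually prove Theorem~\ref{thm4}; it is stated as a result ``adopted from \cite{slotine2004contraction} (Theorem~5) and \cite{di2014contraction} (Theorem~3.2)'' and then simply invoked later in the proof of Theorem~\ref{thm7}. So there is no in-paper argument to compare against; your proposal is effectively a reconstruction of the cited literature.

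On its merits, your outline follows the standard route taken in those references: a single differential Lyapunov function $V=\delta x^{T}\mathcal M\,\delta x$ built from the common metric, per-mode decay at the uniform rate $\lambda=\min\{\lambda_{1},\lambda_{-1}\}$, and then patching across switches. The first two steps are unproblematic and are exactly what the cited works do.

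Where you are right to pause is the saltation step. You correctly identify that for \emph{state-dependent} switching (which is the case here, since $h^{+}=g(x,h,t)$), two neighbouring trajectories hit the guard at different instants and the virtual displacement is mapped by a saltation matrix $S$, so one must verify $S^{T}\mathcal M S\le\mathcal M$. You flag this as ``the main technical hurdle'' but do not discharge it. This is an honest gap in the proposal, not a wrong turn: it is precisely the content of the PWS contraction results in \cite{di2014contraction,fiore2016contraction}, where the non-expansiveness of $S$ in the common metric is established under transversality of the crossing and continuity of $x$. In the simpler treatment of \cite{slotine2004contraction}, the argument is closer to your intuition that continuity of $x$ together with the \emph{same} $\mathcal M$ on both sides keeps $V$ from jumping upward. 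Either way, the common-metric hypothesis is exactly what makes the step go through, as you anticipate; to complete the proof you would need to invoke or reproduce that computation rather than leave it as a hurdle.
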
 

Under the condition that each individual system is contracting under a common metric, partial contraction stated in Theorem \ref{thm2} holds for the switching system \eqref{eq:hrbrid} \cite{di2014contraction} (Theorem 3.2).

\subsection{Spacecraft Rotational Dynamics}
The attitude of a spacecraft, denoted by a rotation matrix $R\in SO(3)$ defines  the orientation of the body reference frame $\mathbf{B}$ fixed to the mass  center of spacecraft respect to the inertial reference frame $\mathbf{I}$ fixed to the  center of the Earth. 

The parameterization of a rotation matrix by a unit quaternion is
\begin{equation*}
q=\left[ q_0 , q^{T}_{v} \right]^{T} \in \mathcal{S}^{3}, \; q_{0} \in \mathbb{R}, \; q_{v}=[q_1,\ q_2,\ q_3]^T
\in \mathbb{R}^{3},
\end{equation*}
where $\mathcal{S}^{3}=\lbrace x \in \mathbb{R}^{4} | x^{T}x=1 \rbrace$ represents a three dimension unit sphere embedded in $\mathbb{R}^{4}$. The corresponding  rotation matrix for  a given  quaternion is $R(q)= I + 2q_{0}S(q_{v}) + 2S^{2}(q_{v})$, with $I\in \mathbb{R}^{3\times 3}$ the identity matrix and $S(\cdot) \in \mathbb{R}^{3\times 3}$ the skew-symmetric operator 
\begin{equation*}
S(u)=
\left[
\begin{array}{ccc}
0 & -u_{3} & u_{2}\\
u_{3} & 0 & -u_{1} \\
-u_{2} & u_{1} & 0 
\end{array}
\right], \; \; u=\left[
\begin{array}{c}
u_{1}\\
u_{2} \\
u_{3} 
\end{array}
\right]\in \mathbb R^3.
\end{equation*}
Notice that $R(q)=R(-q)$, i.e., $q$ and $-q$ represent the same physical orientation.

The quaternions product $\otimes$ computes 
\begin{equation*}
q \otimes p = \left[
\begin{array}{c}
q_{0}p_{0} - q^{T}_{v}p_{v} \\
q_{0}p_{v} + p_{0}q_{v} + S(q_{v})p_{v}
\end{array}
\right].
\end{equation*}
Let  $q^{-1}=\left[q_0 , -q^{T}_{v}\right]^{T}$ be the conjugate of $q$, and $\hat{1}$ $=$ $\left[1,\right.$ $0,$ $0,$ $\left.0\right]^{T}$ the identity quaternion, then  $q\otimes q^{-1}=q^{-1}\otimes q = \hat{1}$.

The kinematics of a spacecraft
\begin{equation}\label{eq:KinematicsR}
\dot{R}(q) = R(q)S\left( \omega \right) ,
\end{equation}
where $\omega$ is the angular velocity in the body frame, expressed in terms of unit quaternions is
\begin{equation}\label{eq:KinematicsQ}
\dot{q}=\frac{1}{2}J(q)\omega ,
\end{equation}
where $J(q) \in \mathbb{R}^{4\times 3}$ given by
\begin{equation}\label{eq:MatJ}
J(q)=
\left[ 
\begin{array}{c}
-q^{T}_{v}  \\
q_{0} I + S(q_{v})
\end{array}
\right] :=
\left[ 
\begin{array}{c}
-q^{T}_{v}  \\
J_{v} (q)
\end{array}
\right] .
\end{equation}

\noindent \emph{\textbf{Properties of the matrix $J(x)$:}} \label{lema1}
For all $x,y \in \mathbb{R}^{4}$ the following properties of matrix $J(x)$ hold  \cite{markley2014fundamentals}:
\begin{enumerate}
\item $J^{T}(x)J(x)=||x||^{2}_{2} I$, \label{pA1}
\item $J^{T}(x)y = 0_{3\times 1} \iff y=kx, \quad k \in \mathbb{R}$, \label{pA2}
\item $J(\alpha x + \beta y)=\alpha J(x) + \beta J(y), \quad \alpha, \beta \in \mathbb{R} $,
\item $J^{T}(x)y = -J^{T}(y)x$,
\item $||J(x)||_{2}=||x||_{2}$,
\item $\frac{d}{dt}\left( J(x) \right) = J(\dot{x})$,
\end{enumerate}
where $0_{n\times m}$ is a matrix of $n\times m$ with zero in all its elements. 

The rotation dynamics of a spacecraft is given by
\begin{equation}\label{eq:Dynamics}
M\dot{\omega} = S(M\omega )\omega + \tau ,
\end{equation}
where $M \in \mathbb{R}^{3\times 3}$ is the constant inertia matrix $M =M^{T} >0$ and $\tau (t) \in \mathbb{R}^{3}$ is the torque control vector, both measured in the body frame.

\subsection{Measurements}\label{subsec:Meas}
In this paper, the attitude $q$ and the angular velocity $\omega$ are assumed to be available measurements. However, in practice the angular velocity from gyros are normally corrupted by bias and noise modelled as
\begin{equation*}
\omega_{g} = \omega + b + r_{g},
\end{equation*}
where $r_g \in \mathbb{R}^{3}$ represents the gyro noise and $b\in \mathbb{R}^{3}$ denotes the bias which is time-varying in the worst case. Nonetheless, to design the gyro bias observer, the measurements noise $r_g$ $=$ $0$ and constant bias are considered, the robustness to this assumption will be tested by simulations. Therefore, 
\begin{align}
\omega_{g} &= \omega + b, \label{eq:omIMU} \\
\dot{b}&=0. \label{eq:bias}
\end{align}
The estimated  angular velocity $\hat{\omega}\in \mathbb{R}^{3}$ is then  given by
\begin{equation}\label{eq:omEst}
\hat{\omega} = \omega_{g} - \hat{b},
\end{equation}
where $\hat{b}\in \mathbb{R}^{3}$ is the bias estimation.

\subsection{Tracking Error Dynamics and Control Objectives}
Given  a  smooth desired trajectory attitude trajectory $q_d (t)$ (with bounded $\dot q_d$) and the desired angular velocity  $\omega_d (t)$  related by $\dot{q}_d = \frac{1}{2}J(q_d )\omega_d$, define the attitude tracking error as
\begin{equation}\label{eq:eq}
e=q^{-1}_{d} \otimes q = \left[ e_{0} , e_{v}^{T} \right]^{T} =[e_0,\ e_1,\ e_2, \ e_3]^T \in \mathcal{S}^{3},
\end{equation}
its time derivative is then
\begin{equation}\label{eq:eqP}
\dot{e}=\frac{1}{2}J(e)\tilde{\omega},
\end{equation}
where $\tilde{\omega}\in \mathbb{R}^{3}$ is the angular velocity error
\begin{equation}\label{eq:OmTilde}
\tilde{\omega} = \omega - R^{T}(e)\omega_d. 
\end{equation}
Let 
\begin{equation}\label{eq:ln2}
z:= \frac{\theta}{2}\bar{k},
\end{equation} 
where $\left(\bar{k},\frac{\theta}{2}\right)$ is the axis/angle Euler representation of $R(e)$, with $\theta\in\mathbb{R}$, $\bar{k}\in \mathbb{R}^{3}$ and $\|\bar{k}\|=1$. Since $e=\big[ \cos(\frac{\theta}{2}),$ $\sin(\frac{\theta}{2})\bar{k}^{T} \big]^{T}$, the variable $z$ is related with the quaternion logarithm \cite{kim1996compact} $\ln : \mathcal{S}^{3} \to \mathbb{R}^{3}$  by 
\begin{equation}\label{eq:lnQ}
z(e) = \ln(e) = \arccos(e_0) \frac{e_{v}}{\|e_{v}\|}
\end{equation}
%\begin{equation}\label{eq:lnQ}
%z= \ln(e) = \left\lbrace
%\begin{array}{cc}
% \arccos(e_0) \frac{e_{v}}{\|e_{v}\|}, & \mathrm{if} \   e_0 \neq 1, \\
%0_{3\times 1}, & \mathrm{if} \   e_0  = 1.
%\end{array}
%\right.
%\end{equation}
Note that $z(e)$ is continuously differentiable for all $t\geq 0$, and $z(e)=0\iff \theta=0\iff e_0 = 1 \implies \| e_v \| = 0$. The kinematics in terms of $z$ is
\begin{equation} \label{eq:KinematicsZ} 
\dot z = \frac{1}{2} G(z) \tilde{\omega},
\end{equation}
where  $G(z)$ is given by
\begin{equation}  \label{eq:Geq}
G(z) = I + S(z) + \frac{1}{\| z \|^2}\left( 1 - \|z\|\frac{\cos \|z\|}{\sin \|z\|} \right) S^2 (z)
\end{equation}
%\begin{align}  \label{eq:Geq}
%G(z) =& \left\lbrace
%\begin{array}{cc}
%\frac{1}{\| e_{v} \|^{3}} B(e), & \mathrm{if} \  e_0 \neq 1, \\
%I, & \mathrm{if} \   e_0 = 1,
%\end{array}
%\right.\\
%B(e) =& \| e_v \|^2 \arccos (e_{0})J_{v}(e) \label{eq:Be}\\
%     &+  \big( \| e_v \| -  e_{0}\arccos(e_{0}) \big)e_{v}e^{T}_{v}. \notag
%\end{align}
From \eqref{eq:OmTilde}  and by using the fact that $G(z)z=z$ it gives 
\begin{equation} \label{eq:dlnQ2} 
\dot z = -\lambda_{c} z + \frac{1}{2} G(z)(\omega - \omega_r),
\end{equation}
where  $\omega_r$ is defined as
\begin{align}
\omega_r &=  -2\lambda_{c} z + R^{T}(e)\omega_d, \label{eq:OmR}
\end{align}
with $\lambda_c>0$ being a design parameter. The control objective is to achieve $e(t)\to \hat{1}$ and $\omega (t)  \to \omega_d (t)$ exponentially by showing that $z(t)\to  0_{3\times 1}$ and $\omega(t) \to \omega_r(t)$ exponentially.

\section{Ideal Gyro Bias Observer}
In this section, a gyro bias observer is designed based on spacecraft kinematics \eqref{eq:KinematicsQ}. Observer of this section will serve as a baseline design for the later controller developments.

The ideal gyro bias observer is proposed as
\begin{align}
\hat{b} &= \bar{b} - K_o J^T (q_f)q, \label{eq:bE} \\
\dot{\bar{b}} &= \frac{1}{2}K_o J^{T}(q_f)J(q)\hat{\omega} + \gamma K_o J^{T}(q) q_f , \label{eq:bBar} \\
\dot{q}_f &= \gamma \left( q - q_f \right) ,\; q_{f}(0)=q(0), \label{eq:qfp}
\end{align}
where $K_o \in \mathbb{R}^{3\times 3}$,  $K_{o}=K_{o}^{T}>0$  is the observer gain, $\gamma >0$ is a constant gain of the first order filter \eqref{eq:qfp}. The following lemma states a useful property of the linear filter \eqref{eq:qfp}.

\begin{lemma}\label{lem2}
\emph{\textbf{(Linear filter):}}
 For any  $\epsilon_q >0$, there exists a $\gamma'$ such that if $\gamma > \gamma'$, then $\| q - q_f \| < \epsilon_q$ for all $t\geq 0$.
\end{lemma}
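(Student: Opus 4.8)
The plan is to study the filter error $\tilde q := q - q_f$ directly, rather than $q_f$ itself. Subtracting the filter \eqref{eq:qfp} from the kinematics \eqref{eq:KinematicsQ}, and using the matched initial condition $\tilde q(0) = q(0) - q_f(0) = 0$, the error satisfies the stable first-order linear dynamics
\begin{equation*}
\dot{\tilde q} = \dot q - \gamma\,(q - q_f) = -\gamma\,\tilde q + \dot q, \qquad \tilde q(0) = 0,
\end{equation*}
in which $\dot q$ plays the role of an exogenous, bounded forcing term. The whole argument then reduces to the familiar fact that a sufficiently fast first-order filter tracks a signal with bounded derivative up to an arbitrarily small residual.

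First I would establish a uniform bound on the forcing. Since $q(t) \in \mathcal S^3$ we have $\|q\| = 1$, so by Property \ref{pA1}, $\|J(q)\omega\|^2 = \omega^{T} J^{T}(q) J(q)\,\omega = \|q\|^2\,\|\omega\|^2 = \|\omega\|^2$; hence from \eqref{eq:KinematicsQ}, $\|\dot q\| = \tfrac12\|\omega\| =: D$, a finite bound since the angular velocity is bounded. This estimate is uniform in time, $\sup_{t\ge 0}\|\dot q(t)\| \le D$, which is exactly what is needed for the conclusion to hold for all $t \ge 0$ and not merely asymptotically.

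Next I would integrate the error dynamics by variation of constants and take norms:
\begin{equation*}
\tilde q(t) = \int_{0}^{t} e^{-\gamma(t-s)}\,\dot q(s)\,ds, \qquad
\|\tilde q(t)\| \le D\int_{0}^{t} e^{-\gamma(t-s)}\,ds = \frac{D}{\gamma}\bigl(1 - e^{-\gamma t}\bigr) \le \frac{D}{\gamma}.
\end{equation*}
Given any $\epsilon_q > 0$, the choice $\gamma' = D/\epsilon_q$ then yields, for every $\gamma > \gamma'$, the uniform bound $\|\tilde q(t)\| \le D/\gamma < D/\gamma' = \epsilon_q$ for all $t \ge 0$, which establishes the claim.

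The only step requiring genuine care is the uniform boundedness of the forcing $\dot q$. The identity $\|\dot q\| = \tfrac12\|\omega\|$ is immediate from the $J$-properties, but it rests on $\omega$ remaining bounded; as a standalone statement about the filter this is the implicit regularity hypothesis on the input signal $q$, whereas inside the closed loop it would have to be re-verified once $\omega$ is shown to be bounded. I would therefore keep the constant $D$ explicit rather than absorbing it, so that the dependence $\gamma' = D/\epsilon_q$ is transparent, and I would emphasize that the estimate is uniform over $t \ge 0$, so it covers the transient as well as the steady state.
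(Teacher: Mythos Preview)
Your argument is correct and, if anything, more rigorous than the paper's own proof. The approaches differ: the paper treats the filter in the frequency domain, writing $q_f = \frac{\gamma}{p+\gamma}q$ with $p=\frac{d}{dt}$, so that $q-q_f=\frac{p}{p+\gamma}q$, and then bounds this by invoking a frequency cutoff $\omega_{\max}$ ``beyond which $q\approx 0$'', obtaining $\gamma'=\omega_{\max}/\epsilon_q$. You instead work entirely in the time domain, derive the error ODE $\dot{\tilde q}=-\gamma\tilde q+\dot q$, and use variation of constants together with the explicit bound $\|\dot q\|=\tfrac12\|\omega\|$ from Property~\ref{pA1}. Your route makes the hypothesis transparent (a uniform bound $D=\tfrac12\sup_t\|\omega(t)\|$ on the angular velocity) and yields an explicit $\gamma'=D/\epsilon_q$, whereas the paper's $\omega_{\max}$ is a heuristic bandwidth parameter rather than a quantity defined from the signals. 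Both proofs ultimately encode the same intuition---a fast first-order filter tracks a slowly varying input---but yours is the cleaner mathematical statement, and your closing caveat about re-verifying boundedness of $\omega$ once the loop is closed is well placed.
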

\begin{proof}
Let $p=\frac{d}{dt}$ the time derivative operator. From \eqref{eq:qfp}, $q_f (t)$ is rewritten as $q_f (t) = \frac{\gamma}{p+\gamma} q(t) $. Then
\begin{equation*}
\| q (t) - q_f (t) \| = \| \frac{p}{p+\gamma}q(t) \| \leq  \| \frac{p}{p+\gamma} \| < \epsilon_q,
\end{equation*}
provided that $\gamma > \gamma'  \vcentcolon = \omega_{max} / \epsilon_q $, where $\omega_{max}$ is the frequency limit beyond which $q\approx 0$.
\end{proof}

\begin{theorem}\label{thm5}
\emph{\textbf{(Global exponential convergence of the bias observer):}}
For a given observer gain $K_o>0$, chose $\gamma>0$  as in Lemma \ref{lem2} such that 
\begin{equation*}
\lambda_{o} \vcentcolon = \left[ \lambda_{min}(K_{o}) - \lambda_{max}(K_{o})\epsilon_q \right]>0.
\end{equation*}
Then the observer, defined by \eqref{eq:bE}-\eqref{eq:qfp} forces $\hat{b}(t) \to b$ exponentially with the convergence rate $\lambda_o$, $\forall \hat b(0)\in \mathcal X_b:=\mathbb{R}^3$.
\end{theorem}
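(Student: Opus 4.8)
The plan is to reduce the whole statement to a contraction claim for the bias error $\tilde b:=b-\hat b$ and then invoke Theorem \ref{thm2}, following the contraction recipe.

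First I would compute the error dynamics. Differentiating \eqref{eq:bE} and using the identity $\tfrac{d}{dt}J(x)=J(\dot x)$ gives $\dot{\hat b}=\dot{\bar b}-K_o\big[J^T(\dot q_f)q+J^T(q_f)\dot q\big]$. I would then substitute $\dot{\bar b}$ from \eqref{eq:bBar}, $\dot q_f=\gamma(q-q_f)$ from \eqref{eq:qfp}, $\dot q=\tfrac12 J(q)\omega$ from \eqref{eq:KinematicsQ}, and use linearity of $J$ to expand $J^T(\dot q_f)=\gamma\big(J^T(q)-J^T(q_f)\big)$. Here is the crux: the null-space property \ref{pA2} kills $J^T(q)q=0$, and the antisymmetry $J^T(q)q_f=-J^T(q_f)q$ makes the two surviving $\gamma$-terms cancel exactly, leaving $\dot{\hat b}=\tfrac12 K_o J^T(q_f)J(q)\,(\hat\omega-\omega)$. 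Since $\hat\omega-\omega=\omega_g-\hat b-\omega=b-\hat b=\tilde b$ by \eqref{eq:omIMU}, \eqref{eq:omEst}, and $\dot b=0$ from \eqref{eq:bias}, I arrive at the clean linear time-varying system
\[
  \dot{\tilde b}=-\tfrac12\,K_o J^T(q_f)J(q)\,\tilde b .
\]

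Next I would run the contraction argument on the virtual system $\dot y=-\tfrac12 K_o J^T(q_f)J(q)\,y$, driven by the external signals $q(t),q_f(t)$, which admits both $y=\tilde b(t)$ and $y\equiv 0$ as particular solutions. Taking the identity metric $\mathcal M=I$, I would bound the symmetric part of its Jacobian. Writing $q_f=q-\Delta$ with $\Delta:=q-q_f$ and using \ref{pA1} (so $J^T(q)J(q)=\|q\|_2^2 I=I$) together with linearity gives $J^T(q_f)J(q)=I-J^T(\Delta)J(q)$, which splits the Jacobian into the nominal contracting part $-\tfrac12 K_o$ and a perturbation. By $\|J(x)\|_2=\|x\|_2$ and $\|q\|_2=1$, the perturbation has norm at most $\tfrac12\lambda_{max}(K_o)\|\Delta\|_2$, and Lemma \ref{lem2} guarantees $\|\Delta\|_2=\|q-q_f\|<\epsilon_q$ once $\gamma>\gamma'$. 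Hence the symmetric part is upper bounded by $-\tfrac12[\lambda_{min}(K_o)-\lambda_{max}(K_o)\epsilon_q]I=-\tfrac12\lambda_o I$, negative definite precisely under the hypothesis $\lambda_o>0$; equivalently $V:=\tilde b^T\tilde b$ satisfies $\dot V\le -\lambda_o V$, delivering the claimed exponential decay governed by $\lambda_o$.

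Finally, since this Jacobian is independent of $y$, the contraction region is all of $\mathbb R^3$ and boundedness of $q,q_f$ yields forward completeness, so the virtual system is globally contracting; Theorem \ref{thm2} then forces $\tilde b(t)\to 0$, i.e. $\hat b(t)\to b$, globally and exponentially for every $\hat b(0)\in\mathcal X_b=\mathbb R^3$. I expect the only real obstacle to be the exact cancellation of the $\gamma$-terms in the error-dynamics derivation: obtaining the clean system above hinges delicately on combining \ref{pA2} with the antisymmetry identity, and the filter \eqref{eq:qfp} appears to exist solely to supply the symmetrizing relation $J^T(q)J(q)=I$ that lets the contraction bound close. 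Once that reduction is in hand, the remaining estimate is routine.
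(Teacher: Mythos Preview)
Your proposal is correct and follows essentially the same route as the paper: you derive the analysis form $\dot{\hat b}=-\tfrac12 K_o J^T(q_f)J(q)(\hat b-b)$ via the same cancellation of the $\gamma$-terms (using $J^T(q)q=0$ and $J^T(q)q_f=-J^T(q_f)q$), then run the identical contraction estimate with metric $I$ and the splitting $J^T(q_f)J(q)=I+J^T(q_f-q)J(q)$ together with Lemma~\ref{lem2}. The only cosmetic difference is that you pose the virtual system in the shifted variable $y=\xi-b$ (particular solutions $\tilde b$ and $0$) rather than in $\xi$ (particular solutions $\hat b$ and $b$), which yields the same Jacobian $J_o$ and the same bound $\dot V\le -\lambda_o V$.
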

\begin{proof}
The proof consists in three steps. First, the dynamics of the observer \eqref{eq:bE}-\eqref{eq:qfp} is written into a suitable (analysis) form, which allows in the second step  to propose a virtual system. The virtual system has as particular solutions the trajectory of the observer and the trajectory of the target dynamics \eqref{eq:bias}. The contraction analysis in the third step concludes the contraction of the virtual system. Therefore by Theorem \ref{thm2}, it follows the exponential convergence of the estimate to its actual one.
 
By taking the time derivative of \eqref{eq:bE}, then substituting \eqref{eq:bBar}, \eqref{eq:qfp} and \eqref{eq:KinematicsQ} and 
invoking the Properties of matrix $J(x)$  the analysis form of the observer dynamics can be written as
\begin{equation}\label{eq:bEp}
\dot{\hat{b}} = -\frac{1}{2}K_o J^{T}(q_f)J(q)\left( \hat{b} - b \right),
\end{equation}
which together with bias dynamics \eqref{eq:bias} suggests the following virtual system
\begin{equation}\label{eq:VsysO}
\dot{\xi} = -\frac{1}{2}K_o J^{T}(q_f)J(q)\left( \xi - b \right).
\end{equation}

The virtual system $\xi$ has two particular solutions: $\xi = \hat{b}$ and $\xi = b$ which corresponds to trajectory of the observer \eqref{eq:bEp} and trajectory of the bias dynamics \eqref{eq:bias}, respectively.

The differential  dynamics of \eqref{eq:VsysO} is
\begin{equation}\label{eq:VarDynObsI}
\delta\dot{\xi} = J_{o} \delta\xi,
\end{equation}
where $J_o$ is the Jacobian of the ideal observer
\begin{equation}\label{eq:JIObs}
J_{o} = -\frac{1}{2}K_o J^{T}(q_f)J(q).
\end{equation}

Let  the metric be $\mathcal{M}_1 = I$. The squared length of $\delta\xi$ under $\mathcal{M}_1$ is $V_1 = \delta\xi^{T} \mathcal{M}_1 \delta\xi = \delta\xi^{T} \delta\xi$. The time evolution of $V_1$ is
\begin{eqnarray}
\dot{V}_1 &=& 2\delta\xi^{T} \delta\dot{\xi} = 2\delta\xi^{T} J_o \delta\xi  \nonumber \\
        &=& -\delta\xi^{T}K_o J^{T}(q_f)J(q) \delta\xi \nonumber \\ 
        &=& -\delta\xi^{T} K_{o} \delta\xi - \delta\xi^{T} K_{o}J^{T}(q_f - q)J(q) \delta\xi \nonumber \\ 
        &\leq & - \lambda_{min}(K_{o})\| \delta\xi \|^{2} \nonumber \\
        & &+ \lambda_{max}(K_{o})\| J^{T}(q_f - q)\|\|J(q)\| \| \delta\xi \|^{2} \nonumber \\ 
        &\leq & - \lambda_{min}(K_{o})\| \delta\xi \|^{2} + \lambda_{max}(K_{o})\| q_f - q\| \| q\| \| \delta\xi \|^{2} \nonumber \\ 
 &\leq & - \left( \lambda_{min}(K_{o}) - \lambda_{max}(K_{o})\epsilon_q \right) \| \delta\xi \|^{2} \nonumber \\ 
 &= & - \lambda_{o} \| \delta\xi \|^{2}. \label{eq:TimeEvolObsI}
\end{eqnarray}
Therefore, for $\lambda_{o}>0$ the virtual system \eqref{eq:VsysO} is contracting in $\mathcal X_b$ and all its particular solutions converge exponentially to each other by Theorem \ref{thm2}. In particular, $\hat{b}(t)\to b$ exponentially with the convergence rate $\lambda_{o}$, $\forall \hat b(0)\in \mathcal X_b$.
\end{proof} 
%\begin{pf}
%\end{pf}

\begin{rmk}[Filter]
The aims of filter \eqref{eq:qfp} are to filter out the measurement noise in quaternion for frequencies beyond  $\omega_{max}$ defined in Lemma \ref{lem2} and to get rid of using the time derivative of the quaternion in the observer implementation. 

For the case when $K_o$ $=$ $k_o I$ for some scalar $k_o$ $>$ $0$, the observer convergence rate becomes to $\lambda_{o} = k_o \left[ 1- \epsilon_q \right]$. If $0<\epsilon_q <1$, condition of Theorem \ref{thm5} is fulfilled, which is always possible by choosing  $\gamma >$ $ \omega_{max}/\epsilon_q$ according to Lemma \ref{lem2}.
\end{rmk}

%----------------------------------------------------------------------------------------------------------------------------------

\section{Tracking Controller with Gyro Bias Estimation}
Observer \eqref{eq:bE}-\eqref{eq:qfp} is used here to design an attitude control using the angular velocity estimation \eqref{eq:omEst}. The following tracking controller is proposed
\begin{eqnarray} \label{eq:OCtrl}
\tau &=& M \dot{\hat{\omega}}_r - S(M\hat{\omega})\omega_r - \frac{1}{2}G^{T}(z)z  \nonumber \\
& & - (K_c - 2\lambda_c P)(\hat \omega-\omega_r),
\end{eqnarray}
where $\lambda_c>0$ and and $0< K_c=K_c^T \in \mathbb{R}^{3\times 3}$ are the controller gains. The matrix $P_{a} =  \frac{1}{2}(P-P^{T})$ is the skew-symmetric part of matrix $P := M G(z)$. The estimated angular velocity $\hat{\omega}$, the quaternion error $e$, the reference angular velocity $\omega_r$ and the function $G(z)$  are defined in \eqref{eq:omEst}, \eqref{eq:eq}, \eqref{eq:OmR} and \eqref{eq:Geq} respectively. In addition, $\dot{\hat{\omega}}_r$ denotes $\dot {\omega}_r$ with ${\hat \omega}$  replacing by ${\omega}$ and is calculated by
\begin{eqnarray}
\dot{\hat{\omega}}_r &=& 2\lambda_{c}^{2}z+\lambda_{c} G(z)\omega_{r} +R^{T}(e)\dot{\omega}_{d} \nonumber \\ 
& & -\left( \lambda_{c} G(z) - S( R^{T}(e)\omega_d ) \right)\hat{\omega} .\label{eq:OmRpE_1}
\end{eqnarray}

To ensure the overall system to be contracting, the ideal observer \eqref{eq:bE}-\eqref{eq:bBar} is modified to 
\begin{align}
\hat{b} &= \bar{b} - K_o J^T (q_f)q - 2\lambda_c M z, \label{eq:bBarCO}\\ 
\dot{\bar{b}} &= \frac{1}{2}K_o J^{T}(q_f)J(q)\hat{\omega} + \gamma K_o J^{T}(q) q_f - 2\lambda^{2}_c M z, \notag 
\end{align}
where $q_f$ is the filtered $q$ in \eqref{eq:qfp}.

\begin{theorem}\label{thm6} 
\emph{\textbf{(Global exponential convergence of the continuous attitude controller):}}
The control law \eqref{eq:OCtrl}-\eqref{eq:OmRpE_1} together with the observer 
\eqref{eq:bBarCO} in closed-loop with the system \eqref{eq:KinematicsQ} and \eqref{eq:Dynamics} drives $\hat{b}$ $\to$ $b$, $\omega$ $\to$ $\omega_d$ and $z \to 0_{3\time 1}$ exponentially from any initial condition $[\hat b^T(0), \ \omega^T (0), \ z(0)^T  ]^T \in \mathcal X_c:=\mathbb{R}^3\times \mathbb{R}^3\times \mathbb{R}^3$. Consequently, $q\to q_d$ exponentially $\forall q(0)\in \mathcal{S}^{3}. $
\end{theorem}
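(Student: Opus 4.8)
The plan is to treat the whole closed loop with the same partial--contraction machinery used for the baseline observer in Theorem \ref{thm5}. First I would pass to the error coordinates $\tilde b := \hat b - b$, $s := \omega - \omega_r$, keeping $z$, so that the three objectives become $\tilde b\to 0$, $s\to 0$, $z\to 0$ exponentially. These already imply the statement: $z\to 0$ gives $e\to\hat 1$ and hence $q\to q_d$, while $s\to 0$ together with $z\to 0$ (so that $\omega_r\to\omega_d$ by \eqref{eq:OmR}) yields $\omega\to\omega_d$, and $\hat b\to b$ is immediate. Note that with this choice $\hat\omega-\omega_r=s-\tilde b$ and $\omega-\omega_r=s$.

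Next I would derive the analysis form. Differentiating the modified observer \eqref{eq:bBarCO} and using \eqref{eq:qfp}, \eqref{eq:KinematicsQ}, property \ref{pA1}, $J^{T}(x)x=0$ and $J^{T}(x)y=-J^{T}(y)x$, together with $\hat\omega-\omega=-\tilde b$ and the $z$--kinematics \eqref{eq:dlnQ2}, the observer collapses to $\dot{\tilde b}=-\tfrac12 K_oJ^{T}(q_f)J(q)\tilde b-\lambda_c P s$: the $z$-dependent corrections in \eqref{eq:bBarCO} are chosen precisely so that, after substituting $\dot z$, the $z$-linear terms cancel and only the coupling $-\lambda_c P s$ survives. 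Substituting \eqref{eq:OCtrl}--\eqref{eq:OmRpE_1} into \eqref{eq:Dynamics}, and using $\dot{\hat\omega}_r-\dot\omega_r=(\lambda_c G(z)-S(R^{T}(e)\omega_d))\tilde b$, the feedforward cancels and $M\dot s$ reduces to gyroscopic terms $S(Ms)s+S(M\omega_r)s$, the restoring term $-\tfrac12 G^{T}(z)z$, the damping $-K_c s+2\lambda_c P s$, and a $\tilde b$-coupling; the last equation is simply $\dot z=-\lambda_c z+\tfrac12 G(z)s$.

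Then, in the spirit of Theorem \ref{thm2}, I would propose a virtual system in $\chi=[\xi_z^{T},\xi_s^{T},\xi_b^{T}]^{T}$ having the actual error trajectory and the origin as particular solutions, and test contraction under the block metric $\mathcal M=\mathrm{blkdiag}(I,\,M,\,\nu I)$ with $\nu>0$. Evaluating $\dot V$ for $V=\delta\chi^{T}\mathcal M\delta\chi$, the design cancellations surface: the gyroscopic terms are skew in the metric $M$ and contribute nothing, the unit weight on $z$ makes the $z$--$s$ cross terms $\propto G(z)$ cancel against $-\tfrac12 G^{T}(z)z$, and property \ref{pA1} with Lemma \ref{lem2} give $\delta\xi_b^{T}K_oJ^{T}(q_f)J(q)\delta\xi_b\ge\lambda_o\|\delta\xi_b\|^2$ exactly as in \eqref{eq:TimeEvolObsI}. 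What remains has the form $-2\lambda_c\|\delta\xi_z\|^2-2\,\delta\xi_s^{T}(K_c-2\lambda_c P_s)\delta\xi_s-\nu\lambda_o\|\delta\xi_b\|^2+2\,\delta\xi_s^{T}C\,\delta\xi_b$, with $P_s=\tfrac12(P+P^{T})$ and $C$ a combination of $K_c$, $\lambda_c P$, $S(\omega_r)M$ and $MS(R^{T}(e)\omega_d)$. I would close the estimate by completing squares: choose $K_c$ to dominate $2\lambda_c P_s$, then $\nu$ large enough to absorb $C$, so that the symmetric part of the generalized Jacobian is uniformly negative definite and Theorem \ref{thm2} delivers exponential convergence of $\chi$, hence of $(\hat b,\omega,z)$.

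The hard part is the uniformity of this negative definiteness. Because $G(z)$, and therefore $P=MG(z)$ and $P_s$, grow without bound as $\|z\|\to\pi$ (the factor $1-\|z\|\cos\|z\|/\sin\|z\|$ diverges), no fixed $K_c$ can dominate $2\lambda_c P_s$ over all of $\mathbb{R}^{3}$; moreover the sublevel sets of $V$ are not contained in $\{\|z\|<\pi\}$, so a large initial velocity error can drive $z$ toward the antipodal configuration where the logarithm \eqref{eq:lnQ} and $G(z)$ are singular. This is exactly the topological obstruction to continuous global stabilization. I would therefore either establish contraction with a rate valid on each compact set $\{\|z\|\le\pi-\epsilon\}$ (where $\omega_r$ and $R^{T}(e)\omega_d$ are bounded and $C$, $P_s$ are bounded) together with forward invariance of a suitable sublevel set, reading the claim as almost--global exponential convergence (every initial attitude except the measure-zero antipodal one), or defer the genuinely global statement to the hysteretic switching controller of Section V, whose individual branches are the contracting system analyzed here and to which Theorem \ref{thm4} applies.
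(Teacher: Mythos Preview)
Your overall strategy---pass to analysis form, propose a virtual system, test contraction in the block metric $\mathrm{blkdiag}(I,M,I)$---is exactly the paper's. The place where you diverge, and where your ``hard part'' arises, is in the accounting of the $P$ and $G(z)$ terms.

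In the paper the Jacobian of the virtual system splits as $J_{oc}=J_s+A$ where $A$ is the block matrix carrying \emph{all} of $\lambda_c P$, $-\lambda_c P^{T}$, $S(M\omega)+2\lambda_c P_a$, $\tfrac12 G^{T}(z)$ and $-\tfrac12 G(z)$; by construction $A$ is skew--symmetric in the metric $\mathcal{M}_2$, so $\delta\xi^{T}A\,\delta\xi\equiv 0$ and these terms disappear from $\dot V$ \emph{entirely}. What survives is the block lower--triangular $J_s$ with diagonal blocks $J_o$, $-K_c$, $-\lambda_c I$ and the single off--diagonal coupling $F_r=S(\omega_r)M-K_c+MS(R^{T}(e)\omega_d)$. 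Contraction then follows from the hierarchical--systems result (Theorem~\ref{thm3}) rather than from a completing--the--squares argument with a scaling $\nu$.

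Your computation that the $s$--diagonal contains ``$-K_c+2\lambda_c P$'' and hence requires $K_c$ to dominate $2\lambda_c P_s$ misses this cancellation: the control term $2\lambda_c P_a(\hat\omega-\omega_r)$ together with the observer modification $-2\lambda_c Mz$ in \eqref{eq:bBarCO} are chosen precisely so that only the skew part $P_a$ appears on the diagonal and the cross block is exactly $\pm\lambda_c P$, i.e.\ the whole $P$--contribution is antisymmetric. Consequently the possible blow--up of $G(z)$ (and of $P$, $P_s$) as $\|z\|\to\pi$ never enters the contraction estimate; the only boundedness needed is that of $F_r$, which holds because $\|z\|\le\pi$ and $\omega_d$ is bounded. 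Your concern about the antipodal singularity of the logarithm is a legitimate caveat on the word ``global'' (and indeed motivates Section~V), but it is orthogonal to the contraction argument itself: once the skew cancellation is recognised, no gain condition of the form $K_c\succ 2\lambda_c P_s$ is required.
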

\begin{proof}
As in the proof of Theorem \ref{thm5}, the proof here is carried out in three steps. First, the observer dynamics and controller are rewritten in their analysis form as follows. 

Observer dynamics is calculated from \eqref{eq:bBarCO} as
\begin{align*}
\dot{\hat{b}} =& \dot{\bar{b}} - K_o J^T (q_f)\dot{q} - K_o J^T(\dot{q}_f)q - 2\lambda_c M \dot z  \\
 =&\frac{1}{2}K_o J^{T}(q_f)J(q)\hat{\omega} + \gamma K_o J^{T}(q) q_f - 2\lambda^{2}_c M z  \\  
 &  - K_o J^T (q_f)\frac{1}{2}J(q)\omega - K_o J^T\big(\gamma ( q - q_f )\big) q  \\ 
 &   - 2\lambda_c M \left[ -\lambda_{c} z + \frac{1}{2} G(z) \big(\omega - \omega_r\big) \right]  \\ 
 =& \frac{1}{2}K_o J^{T}(q_f)J(q)\big( \hat{\omega} - \omega \big) - \lambda_c M G(z)\big(\omega - \omega_r\big),
\end{align*}
where \eqref{eq:KinematicsQ}, \eqref{eq:qfp}, \eqref{eq:bBarCO}, \eqref{eq:dlnQ2} and Properties of matrix $J(x)$  were used. Then substituting \eqref{eq:omIMU} and \eqref{eq:omEst} gets
\begin{align}
\dot{\hat{b}} &= -\frac{1}{2}K_o J^{T}(q_f)J(q)\left( \hat{b} - b \right) - \lambda_c P (\omega - \omega_r) \notag\\
:&=\bar f_1(\bar x,x), \label{eq:bEcoP}
\end{align}
with $x^T=\left[x_1^T,\right.$ $x_2^T,$ $ \left. x_3^T \right]$ $:=$ $ \left[ b^T, \right.$ $ \omega^T,$ $\left. z_d^T\right]$ 
%with $\dot{z}_d=-\lambda_c z_d$ for $\dot{z}_d (t) = z_d (t) \vcentcolon = 0_{3\times 1} $ the desired logarithm error dynamics 
and $\bar x^T = [\bar{x}_1^T,$ $ \  \bar{x}_2^T, \ \bar{x}_3^T]:=\left[\hat b^T, \ \omega^{T}_{r}, \ z^T \right]$.

The analytical form of the controller \eqref{eq:OCtrl} is obtained by first taking the time derivative of \eqref{eq:OmR}
\begin{equation*}
\dot{\omega}_r = \dot{\hat{\omega}}_r - \left( \lambda_{c} G(z) - S( R^{T}(e)\omega_d ) \right)( \omega - \hat{\omega} ),
\end{equation*}
and then pre-multiplying  $M$ in both sides and substituting the controller \eqref{eq:OCtrl} as
\begin{align}
M\dot{\omega}_r =& S(M\omega ) \omega_r + \left( K_c -2\lambda_c P_{a} \right) \left( \omega - \omega_r \right) \nonumber  \\ 
 &+ \big( F_r - \lambda_c P^{T} \big) \big( \hat{b} -b \big) + \frac{1}{2} G^{T}(z) z  + \tau \notag\\
 :=& \bar f_2(\bar x,x)+\tau,   \label{eq:CtrlAnyOF}
\end{align}
where $F_r \in \mathbb{R}^{3\times 3}$ is given by
\begin{equation}\label{eq:MatFOF}
F_r =  S(\omega_r ) M - K_c + M S\big( R^{T}(e)\omega_d \big),
\end{equation}
which is bounded for bounded $\omega_d$. 

By the last, the analytical form of the overall system is completed by recalling that the logarithm error quaternion in \eqref{eq:dlnQ2}
\begin{align} \label{eq:dlnQ}
\dot z &=-\lambda_{c} z  + \frac{1}{2} G(z)(\omega - \omega_r)
:=\bar f_3(\bar x,x).
\end{align}

%The observer/controller dynamics in its analysis form is then represented by \eqref{eq:bEcoP}, \eqref{eq:CtrlAnyOF} and \eqref{eq:dlnQ}. 
%This with the gyro bias dynamics \eqref{eq:bias} and  the spacecraft dynamics \eqref{eq:Dynamics} and \eqref{eq:bias} expressed 
Notice that \eqref{eq:Dynamics}, \eqref{eq:bias} and the desired  dynamics of quaternion logarithm  $\dot{z}_d=-\lambda_c z_d$ represent the current dynamical system:
\begin{align}
    \dot b&=0:=f_1(x), \label{overal-dy1}\\
   M\dot{\omega} &= S(M\omega )\omega + \tau:=f_2(x)+\tau, \label{overal-dy2}\\
   \dot z_d&=-\lambda_c z_d:=f_3(x), \label{overal-dy3}
\end{align}
which suggests the following virtual system
\begin{align}
    \mathcal M_2 \dot\xi=\bar f(\xi,x)+\bar{\tau}, \label{eq:VsysOF}
\end{align}
where $\mathcal{M}_2 \vcentcolon = \textsl{diag} \lbrace I ,M ,I \rbrace>0 $, $\xi^T $ $=$ $[ \xi^{T}_{1}$ $ , \xi^{T}_{2}$ $ ,\xi^{T}_{3}]$ , $\bar{\tau}^T$ $=$ $[0_{3\times 1}^T $ $, \tau^T $ $, 0_{3\times 1}^T]$ and $\bar f^T(\xi,x)$ $=$ $[\bar f_1^T(\xi,x)$ $ , \bar f_2^T(\xi,x) $ $, \bar f_3^T(\xi,x)]$ associated with the system dynamics \eqref{overal-dy1}-\eqref{overal-dy3}   through $\bar f(x,x)=f(x)$, defined in $\mathcal X_c:=\mathbb{R}^3\times \mathbb{R}^3\times \mathbb{R}^3$.

Clearly this virtual system $\xi$ has two particular solutions: 

\noindent $\left[ \hat{b}^{T},\ \omega^{T}_{r}, \ z^T \right]^{T}$ and $\left[ b^{T}, \ \omega^{T}, \ z_d^T  \right]^{T}$, corresponding to the trajectory of the closed-system \eqref{eq:bEcoP}-\eqref{eq:dlnQ} and those of \eqref{overal-dy1}-\eqref{overal-dy3}, respectively. 
To analyze the contraction property of the  virtual system \eqref{eq:VsysOF}, calculate its differential dynamics
\begin{equation}\label{eq:VarDynOF}
\mathcal{M}_2 \delta \dot{\xi} = J_{oc_1} \delta \xi,
\end{equation}
where $J_{oc}$ is the  Jacobian of the observer/controller dynamics
\begin{align}
J_{oc} &= 
\left[
\begin{array}{ccc}
 J_{o} & 0_{3 \times 3} & 0_{3 \times 3} \\
 F_r & -K_c & 0_{3 \times 3} \\
 0_{3 \times 3} & 0_{3 \times 3} & -\lambda_c I
\end{array} 
\right] \nonumber \\
+&\left[
\begin{array}{ccc}
 0_{3 \times 3} & \lambda_c P & 0_{3 \times 3} \\
 -\lambda_c P^{T} & S\left( M \omega \right)+2\lambda_c P_{a} & \frac{1}{2} G^{T}(z) \\
 0_{3 \times 3} & -\frac{1}{2} G(z) & 0_{3 \times 3}
\end{array} 
\right],  \label{eq:Jacob_OF}
\end{align}
with $J_o$ the Jacobian of the ideal observer defined in \eqref{eq:JIObs}.

Taking $\mathcal M_2$ as the metric, let $V_2 = \delta\xi^{T} \mathcal{M}_2 \delta\xi$ the squared length of $\delta\xi$ under the metric $\mathcal{M}_2$. Its  time evolution of is
\begin{equation}\label{eq:TimeEvol_OF}
\dot{V}_2 = 2\delta\xi^{T} \mathcal{M}_2 \delta \dot{\xi} = 2\delta\xi^{T} J_{oc} \delta \xi = 2\delta\xi^{T} J_{s} \delta \xi, 
\end{equation} 
with
\begin{align}\label{eq:Jacob_OFsym}
J_{s} &=
\left[
\begin{array}{cc}
 J_{o} & 0_{3 \times 6} \\
 F & J_{c}
\end{array}
\right], \notag\\
F &= \left[
\begin{array}{c}
 F_r \\
 0_{3 \times 3}
\end{array}
\right], \;
J_{c} = \left[
\begin{array}{cc}
 -K_c & 0_{3\times 3} \\
 0_{3 \times 3} & -\lambda_c I
\end{array}
\right].
\end{align}
The Jacobian $J_{s}$ has the hierarchical structure of \eqref{eq:HierarcSys2} with negative definite matrices $J_o$ and $J_{c}$. Given that $F_r$ is bounded, contraction of virtual system \eqref{eq:VsysOF} follows from Theorem \ref{thm3}. Therefore, all its particular solutions converge exponentially to each other, in particular $\hat{b}(t)\to b$, $\omega_r (t) \to \omega (t)$ and $z(t)\to z_d(t) \to 0$ exponentially for all initial conditions $[\hat b^T(0), \ \omega^T (0), \ z(0)^T  ]^T\in \mathcal X_c$, which in turn implies $e(t) \to {\hat{1}}$ and $\omega(t) \to \omega_d (t)$ exponentially.
\end{proof} 

\begin{rmk}[The continuous controller]
By using the quaternion logarithm \eqref{eq:eq}, $e=\hat 1$ is the {\it only equilibrium} of the closed-loop system. 
The continuous controller \eqref{eq:OCtrl}-\eqref{eq:OmRpE_1} stabilizes the equilibrium $e_0=1$ (i.e., $e=+\hat 1$) instead $e_v=0_{3\times 1}$ (i.e., $e=\pm\hat 1$) as in most reported works (see, e.g.,  \cite{mayhew2011quaternion, wen1991attitude,tayebi2008unit,thienel2003coupled} and the references therein). This enables the controller to achieve the 
{\it global} convergence of $e=+\hat 1$ instead the {\it almost global} convergence when stabilizing the point $e_v=0_{3\times 1}$ (i.e., $e=\pm \hat 1$). Stabilizing one of the {\it two  equilibria} $e=\pm \hat 1$ will leave the other to be an unstable equilibrium, creating the unwinding phenomenon. Unwinding not only wastes energy, but also may introduce instability in the presence of arbitrarily small perturbations under a discontinuous control \cite{mayhew2011quaternion}.  

Another salient feature of the proposed continuous controller is the stronger (exponential) convergence of $e=\hat 1$   instead of the asymptotic convergence in, for example,  \cite{mayhew2011quaternion, wen1991attitude,tayebi2008unit,thienel2003coupled}. 
\end{rmk}

\section{Unwinding-Free Attitude Tracking Observer-Controller Design}
Given that $e=+\hat{1}$ and $e = -\hat{1}$ represent the same physical orientation, both points should be options to stabilize. In fact, the best option is to stabilize  the closest point from a given initial condition. The controller \eqref{eq:OCtrl}-\eqref{eq:OmRpE_1} can only stabilize $e = +\hat{1}$, then, for an initial attitude condition starting close to $e = -\hat{1}$ the control law will produce an unnecessary full rotation, causing energy waste.

 Motivated by the hybrid control in \cite{mayhew2011quaternion}, in this section an unwinding-free tracking controller is designed.
Instead of stabilizing one of the two points $e=\pm \hat{1}$, this controller stabilizes $e = h \hat{1}$ from any initial condition, where the hysteretically switching variable $h$ is
\begin{align}
    \dot{h}(t) =0, \mathrm{when} \ x \in \mathcal{C} := \{ x \in \mathcal{X}_{c}| h e_0 \geq -\delta \}, \label{eq:hbar} 
    %h^{+}(t)&= \widehat{\mathrm{sgn}} (e_0 (t)), \mathrm{when} \ x \in \mathcal D:=\{ x \in \mathcal{X}_{c} | he_0\leq -\delta \}, \nonumber
\end{align}
\begin{equation*}
    %\dot{h}(t) =0, \mathrm{when} \ x \in \mathcal{C} := \{ x \in \mathcal{X}_{c}| h e_0 \geq -\delta \}, \label{eq:hbar} \\
    h^{+}(t) = \widehat{\mathrm{sgn}} (e_0 (t)), \mathrm{when} \ x \in \mathcal D:=\{ x \in \mathcal{X}_{c} | he_0\leq -\delta \},
\end{equation*}
with $h(0)=\widehat{\mathrm{sgn}}(e_{0}(0))$, where $h^+(t)$ denotes $h(t)$ right after the switching,  and $\delta \geq 0$ is the half-width of the hysteresis. The function  $\widehat{\mathrm{sgn}}(e_0) \in \lbrace 1, -1 \rbrace$ is defined as
\begin{equation}\label{eq:SgnF}
\widehat{\mathrm{sgn}}(e_0)=
\left\lbrace
\begin{array}{cc}
 1, & \mathrm{if} \  e_0\geq 0, \\
-1, & \mathrm{if} \  e_0<0,
\end{array}
\right.,
\end{equation}
which  is displayed in Fig. \ref{fig:Histeresis} with the paths in the direction of the arrow.
\begin{figure}
	\begin{center}
		\includegraphics[trim = 0mm 0mm 0mm 0mm,scale=0.5]{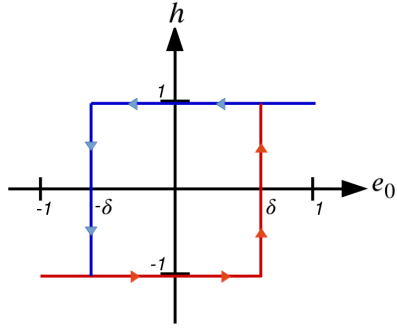}
		\caption{Graphical illustration for function $h\left( e_{0} \right)$.}
		\label{fig:Histeresis}
	\end{center}
\end{figure}
The  switching controller/observer is then obtained by replacing the quaternion error $e$ by $h e$ (notice that $z(e)$ is changed by $z(he)$) in the continuous controller/observer \eqref{eq:OCtrl}-\eqref{eq:bBarCO} 
\begin{eqnarray}
\tau &=& M \dot{\hat{\omega}}_{r} - S\left( M \hat{\omega} \right)\omega_r  - \frac{1}{2}G^{T} \left( z \right) z \nonumber \\ 
& & - \left(K_c - 2\lambda_c P_{a,h} \right)\left(\hat{\omega} - \omega_r\right), \label{eq:OFeedB_UF} \\
\omega_r &=&  -2\lambda_{c} z + R^{T}\left( h e \right)\omega_d, \label{eq:OmR_UF} \\
\dot{\hat{\omega}}_r &=& 2\lambda^{2}_{c} z + \lambda_{c} G \left( z \right)\omega_{r} + R^{T}( h e)\dot{\omega}_{d} \nonumber \\ 
& &  -\left( \lambda_{c} G \left( z \right) - S( R^{T}( h e )\omega_d ) \right)\hat{\omega} , \label{eq:OmRpE_UF} \\
z &=& z(h e), \label{eq:lnh} 
\end{eqnarray}
where $P_{a,h} = \frac{1}{2}\left(P_h - P^{T}_h \right)\in \mathbb{R}^{3\times 3}$ is the skew-symmetric part of $P_h = M G \left( z(he) \right)$ and the observer \eqref{eq:bBarCO} is used again to estimate $\hat{b}$ but replacing \eqref{eq:lnh}.

\begin{theorem}\label{thm7} 
\emph{\textbf{(Global exponential convergence of the attitude switched controller):}}
The control law \eqref{eq:OFeedB_UF}-\eqref{eq:lnh}
in closed-loop with the system defined by \eqref{eq:KinematicsQ}-\eqref{eq:Dynamics} drives $e \to h\hat{1}$, $\omega \to \omega_d$ and $\hat{b} \to b$ with $h$ defined by \eqref{eq:hbar}, exponentially from any initial condition.
\end{theorem}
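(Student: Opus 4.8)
The plan is to reduce Theorem~\ref{thm7} to Theorem~\ref{thm6} one mode at a time and then to glue the two modes together through the switching-contraction result of Theorem~\ref{thm4} and the partial-contraction remark that follows it. The switched controller/observer \eqref{eq:OFeedB_UF}--\eqref{eq:lnh} is obtained from the continuous design \eqref{eq:OCtrl}--\eqref{eq:bBarCO} by the single substitution of $he$ for $e$ (equivalently $z(he)$ for $z(e)$). Hence, for each frozen value $h\in\{-1,1\}$ the resulting individual closed loop is \emph{exactly} of the form treated in Theorem~\ref{thm6}, with $he$ playing the role of $e$. The first step is thus to repeat, for a fixed $h$, the passage to the analysis form and to exhibit the same virtual system \eqref{eq:VsysOF}, now built on $z(he)$, together with its Jacobian \eqref{eq:Jacob_OF}.

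The second step is to check that \emph{both} modes are contracting with respect to the \emph{same} metric $\mathcal M_2=\mathrm{diag}\{I,M,I\}$. This is immediate since $\mathcal M_2$ does not depend on $h$, and the hierarchical structure of $J_s$ in \eqref{eq:Jacob_OFsym} — negative-definite diagonal blocks $J_o,J_c$ and a bounded coupling $F_r$ — is untouched by replacing $e$ with $he$, so the hierarchical-contraction Theorem~\ref{thm3} applies in each mode. Here the hysteresis enters decisively: on the flow set $\mathcal C=\{x\in\mathcal X_c\mid he_0\geq-\delta\}$ one has $\arccos(he_0)\leq\arccos(-\delta)<\pi$, so $z(he)$ stays bounded away from the antipodal singularity of the quaternion logarithm and $G(z(he))$ remains well defined and continuously differentiable. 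The definiteness and boundedness estimates of Theorem~\ref{thm6} then hold verbatim in each mode, so both individual systems are contracting under $\mathcal M_2$.

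The third step invokes Theorem~\ref{thm4}: two individual vector fields contracting under a common metric make the switching closed loop contracting, and by the remark following Theorem~\ref{thm4} partial contraction holds for it as well. Its virtual system then carries the two particular solutions $[\hat b^T,\omega_r^T,z^T]^T$ and $[b^T,\omega^T,z_d^T]^T$, whose exponential mutual convergence is to yield $\hat b\to b$, $\omega_r\to\omega$ and $z(he)\to z_d\to0$, i.e. $e\to h\hat1$ and $\omega\to\omega_d$ exponentially.

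I expect the switching instants to be the main obstacle. Theorem~\ref{thm4} delivers contraction of each mode under the common metric, but concluding global exponential convergence for the switched trajectory requires controlling the behaviour across switches, at which the coordinate $z(he)$ — and through it $\omega_r$ — is discontinuous while the physical states $\hat b,\bar b,q_f,\omega$ remain continuous. The hysteresis is designed so that a switch occurs only when the active target $h\hat1$ is the farther of the two antipodes ($he_0\leq-\delta$) and resets $h$ to the nearer one, which strictly reduces $\|z(he)\|=\arccos(he_0)$ from $\arccos(-\delta)$ to $\arccos(\delta)$. The plan is to exploit this shorter-rotation property to show that no switch increases the contraction measure $V_2=\delta\xi^T\mathcal M_2\delta\xi$, together with the positive half-width $\delta$, which imposes a gap of $2\delta$ in $he_0$ between consecutive switches and thereby rules out Zeno/chattering. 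Finitely many switches then occur on any bounded interval; once the contracting flow pushes $he_0$ away from $-\delta$ toward $1$ no further switch occurs, and the single-mode exponential estimate of Theorem~\ref{thm6} closes the argument, delivering $e\to h\hat1$, $\omega\to\omega_d$ and $\hat b\to b$ exponentially from any initial condition.
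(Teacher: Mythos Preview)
Your proposal is essentially the paper's own proof: write the closed loop in analysis form for each frozen $h$, build the virtual system with particular solutions $[\hat b^T,\omega_r^T,z^T]^T$ and $[b^T,\omega^T,z_d^T]^T$, verify the hierarchical Jacobian structure so that Theorem~\ref{thm3} gives contraction of each mode under the common metric $\mathrm{diag}\{I,M,I\}$, and then invoke Theorem~\ref{thm4} (and the partial-contraction remark after it) to conclude for the switched system. The paper does exactly this, only more tersely---it simply cites Theorem~\ref{thm4} and the references behind it rather than arguing, as you do, that the hysteresis keeps $z(he)$ away from the logarithm singularity, that the jump in $z(he)$ at a switch decreases $\|z\|$, and that Zeno is excluded; your extra care is welcome but not a different route.
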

\begin{proof}
The observer/controller in its analysis form is obtained as \eqref{eq:bEcoP}-\eqref{eq:dlnQ} in Theorem \ref{thm6} but using \eqref{eq:OFeedB_UF}-\eqref{eq:lnh} as follows
\begin{align}
\dot{\hat{b}} &= -\frac{1}{2}K_o J^{T}(q_f)J(q)\left( \hat{b} - b \right) - \lambda_c P_h (\omega - \omega_r) \nonumber\\
&:= \bar f_{1,h}(\bar x,x), \nonumber \\
M\dot{\omega}_r &= S(M\omega ) \omega_r + \left( K_c -2\lambda_c P_{a,h} \right) \left( \omega - \omega_r \right) \nonumber  \\ 
 & + \big( F_h - \lambda_c P^{T}_h \big) \big( \hat{b} -b \big) + \frac{1}{2} G^{T}(z) z   + \tau \notag  \\
 &:= \bar{f}_{2,h}(\bar x,x)+\tau, \nonumber  \\ 
\dot z &= -\lambda_{c} z + \frac{1}{2} G(z)(\omega - \omega_r) \nonumber\\
&:= \bar{f}_{3,h}(\bar x,x), \label{eq:OCanyU} 
\end{align}
with $x^T=\left[x_1^T,\right.$ $ x_2^T,$ $\left. x_3^T \right] $ $:=$ $ \left[ b^T ,\right.$ $ \omega^T ,$ $\left. z_d^T \right]$, $\bar x^T $ $=$ $\big[\bar{x}_1^T,$ $\bar{x}_2^T,$ $\bar{x}_3^T \big] $ $:=$ $\big[\hat b^T,$ $\omega^{T}_{r},$ $z^T \big]$, and $F_h \in \mathbb{R}^{3\times 3}$ is given by
\begin{equation}\label{eq:MatFUF}
F_h =  S(\omega_h ) M - K_c + M S( R^{T}(h e)\omega_d ),
\end{equation}
which is bounded for bounded $\omega_d$.

Thus, system \eqref{eq:OCanyU} suggests the following virtual system
\begin{align}
    \mathcal M_3 \dot\xi =\bar f_{h}(\xi ,x)+\bar{\tau}, \label{eq:VsysUF}
\end{align}
where $\mathcal{M}_3 \vcentcolon = \textsl{diag} \lbrace I ,M ,I \rbrace>0 $, $\xi^{T} = \left[ \xi^{T}_{1} , \xi^{T}_{2} , \xi^{T}_{3}\right]$ , $\bar{\tau}^T $ $=$ $\left[0_{3\times 1}^T , \tau^T , 0_{3\times 1}^T\right]$ and $\bar f_{h}^T(\xi ,x)=\big[\bar f_{1,h}^T(\xi ,x) $ $,$ $\bar f_{2,h}^T $ $(\xi ,x)$ $,$ $ \bar f_{3,h}^T(\xi ,x)\big]$ associated with $f_h$,  because $\bar f_h (x,x)=f_h(x) \in \mathcal X_c:=\mathbb{R}^3\times \mathbb{R}^3\times \mathbb{R}^3$,  where $f_h(x)=f(x)$ the spacecraft dynamics \eqref{overal-dy1}-\eqref{overal-dy3} which is independent of $h$.

The particular solutions of the virtual system are $[ \hat{b}^{T},$ $\omega^{T}_{r},\ z^{T}]^{T}$ and $[ b^{T}, \ \omega^{T}, \ z_d^{T}]$ corresponding to the trajectory of the closed-loop system dynamics and those of the spacecraft dynamics \eqref{overal-dy1}-\eqref{overal-dy3}, respectively.

The virtual system \eqref{eq:VsysUF} is a continuous-time switched system, with \eqref{eq:hbar} the discrete state, which provides the corresponding individual system  \cite{slotine2004contraction,di2014contraction}. Since $h \in \lbrace 1, -1\rbrace$, there are only two different vector fields. The differential dynamics of each continuous system is
\begin{equation}\label{eq:VarDynOF_UF}
\mathcal{M}_3 \delta \dot{\xi } = J_{oc,h} \delta \xi,
\end{equation}
where $J_{{oc},h}$ is
\begin{eqnarray}
J_{{oc},h} &=&
\left[
\begin{array}{ccc}
 J_{o} & 0_{3 \times 3} & 0_{3 \times 3} \\
 F_h & -K_c & 0_{3 \times 3} \\
 0_{3 \times 3} & 0_{3 \times 3} & - \lambda_{c} I
\end{array} 
\right] + \notag\\
& & \left[
\begin{array}{ccc}
 0_{3 \times 3} & \lambda_c P_h & 0_{3 \times 3} \\
 -\lambda_c P^{T}_h & S\left( M\omega \right)+2\lambda_c P_{a,h} & \frac{1}{2} G^{T} (z) \\
 0_{3 \times 3} & -\frac{1}{2} G (z) & 0_{3 \times 3}
\end{array} 
\right], \nonumber\\ \label{eq:Jacob_UF2}
\end{eqnarray}
with $J_o$ the Jacobian of the ideal observer defined in \eqref{eq:JIObs}.

Under  the constant metric $\mathcal{M}_3$ for both individual systems, the squared length of $\delta\xi$ measured by $V_3 = \delta\xi^{T} \mathcal{M}_3 \delta\xi$ has time derivative
\begin{equation}\label{eq:TimeEvolOF_UF}
\dot{V}_3 = 2\delta\xi^{T} J_{oc,h} \delta \xi = 2\delta\xi^{T} J_{s,h} \delta \xi , 
\end{equation}
where $J_{s,h}$ is
\begin{equation}\label{eq:Jacob_UFsym}
J_{s,{h}} =
\left[
\begin{array}{cc}
 J_{o} & 0_{3 \times 6} \\
 F_{2,h} & J_{c}
\end{array}
\right],
F_{2,h} = \left[
\begin{array}{c}
 F_h \\
 0_{3 \times 3}
\end{array}
\right],
\end{equation}
with $J_c$ the Jacobian of the ideal controller defined in \eqref{eq:Jacob_OFsym}.

The Jacobian $J_{s,h}$ has the hierarchical structure of \eqref{eq:HierarcSys2} with $J_o$, $J_{c}$ negative definite and $F_h$ is bounded. Contraction of individual continuous systems 
follows from Theorem \ref{thm3}. Moreover, since both individual systems share the same metric $\mathcal{M}_3$ for all $t$, contraction of the continuous-time switching system \eqref{eq:VsysUF} follows from Theorem \ref{thm4} and the comments after this theorem. As a consequence,  all particular solutions of the virtual system \eqref{eq:VsysUF} contracts to each other, in particular $\hat{b} \to b$, $\omega_r \to \omega$ and $z \to z_d\to 0_{3\times 1}$, that is, $h e \to \hat{1}$ and $\omega \to \omega_d$ exponentially from any initial condition in $\mathcal X_c$.
\end{proof}

\begin{rmk}[The switching controller]
Similar to that in \cite{mayhew2011quaternion}, the hysteretically switching variable $h$ \eqref{eq:hbar} aims at 
enabling the controller to stabilize the closest point $e=+ \hat{1}$ or $e=-\hat{1}$ according to the  half width  $\delta$ of the hysteresis, avoiding excessive waste of energy caused by full rotations. Also, it  makes controller robust to noisy measurements by replacing a single switching point (discontinuous control) by a switching region. The width of the switching region may be determined proportional to the noise size. 

The control law \eqref{eq:OFeedB_UF}-\eqref{eq:lnh} 
%function $h = \widehat{\mathrm{sgn}} \left( e_{0} (t) \right)$ 
for $\delta = 0$, corresponds to the common discontinuous control case. On the other hand, for $\delta = 1$, it corresponds to the continuous control case, i.e., controller \eqref{eq:OCtrl}-\eqref{eq:OmRpE_1} and observer \eqref{eq:bBarCO}.

The number of switching is bounded for any closed-loop solution under the hysteretically switching control \eqref{eq:OFeedB_UF}-\eqref{eq:lnh} as established in  \cite{mayhew2011quaternion} (Theorem 5.3).
\end{rmk}

\section{Simulations}
Three simulations are carried out.  In  the first two simulations the  performance of proposed controller is evaluated in terms of tracking errors and energy-efficiency,  taking as a reference the controller reported in  \cite{thienel2003coupled} for both situations when $\delta = 1$ (continuous) and for $\delta = 0.3$ (unwinding-free).  In the third simulation, noise on gyro measurements and random bias are considered to evaluate the robustness of the  proposed controller.

The first simulation (continuous-control case) evaluates how the convergence (i.e., exponential vs asymptotic) affects the energy-efficiency. In this case, both the proposed controller and that of  \cite{thienel2003coupled}  stabilize the equilibrium  $e=+\hat{1}$ although the closed equilibrium is $e=-\hat{1}$ . The controller  gains were chosen in such a way that both controllers spend approximately the same energy for tracking the desired trajectory. Secondly, a hysteresis width $\delta = 0.3$ was incorporated in the proposed controller, illustrating its performance when unwinding phenomenon is eluded.
All parameters  and initial conditions were chosen the same as in \cite{mayhew2011quaternion}:  $M =\textsl{diag}\lbrace 10 \bar{u} \rbrace$ with $u=[1,2,3]^{T}$, $\bar{u}=u/\| u \|$, $h (0)=1$, $q(0)=[-0.2,\sqrt{1-(0.2)^{2}}\bar{u}]^{T}$ and $\omega (0)=0.5\bar{u}$. Additionally, the desired trajectory and true bias values were taken from \cite{thienel2003coupled} as $\dot{\omega}_d (t) = 0_{3\times 1}$, $\omega_d (t) = [0,0.11,0]^{T}$, $q_d (0)=[1,0,0,0]^{T}$ and $b=[0.05,-0.05,0.033]^{T}$. 

Controller gains  were  $K_c = 1.0 I$, $\lambda_c = 0.01$, $K_o = 1.0 I$ and $\gamma = 0.5$ for the proposed controller and  $k_D = 4.5 I$, $\lambda = 0.045$ and $k = 1.0$  for controller of \cite{thienel2003coupled}, seeking for spending  approximately the same energy in both controllers.

In Fig. \ref{fig:ID_Graf} both controller responses are displayed when $\delta = 1$. 
Fig. \ref{fig:ID_Graf}. e) verifies that both controllers are investing the approximately  same amount of energy with the selected gains through the control effort $\sqrt{\int_{0}^{t} \tau^{T}\tau dt}$. Under that condition, it can be observed from $e_{0}$ (Fig. \ref{fig:ID_Graf}. a)) that the proposed controller tracks the target $90$ $[s]$ faster than the controller of \cite{thienel2003coupled}, with the angle between the actual  and the desired attitude  through the graphic of $2\arccos |e_{0}|$ shown in Fig. \ref{fig:ID_Graf}. b). This was achieved at the price that the tracking error in the angular velocity is larger in the  transient  in Fig. \ref{fig:ID_Graf}. c). The closed loop system reached to the  steady-state at  about $24$ $[s]$,  where the proposed controller has a smaller angular velocity tracking error.   On the other hand,   observers performance measured by  $\| \hat{b} - b \|$  is shown in Fig. \ref{fig:ID_Graf}. d). Both  observers led to  the convergence of the bias error to zero  at a similar time with a better transient performance noticed in the proposed observer.  In consequence, the proposed observer is  more energy-efficient  than the controller in \cite{thienel2003coupled}. This is partly due to the stronger exponential convergence demonstrated in this paper instead of the asymptotic convergence  presented in \cite{thienel2003coupled}.

To evaluate further the energy saving when unwinding phenomenon is present, the proposed   controller  with $\delta = 0.3$ (unwinding-free controller) was used instead of the proposed continuous controller in the first simulation. Therefore, it stabilized $e=-\hat{1}$ instead of $e=+\hat{1}$.  Fig. \ref{fig:UW_Graf}. a) and b) show that the hysteresis function switches the control action to stabilize $e = -\hat{1}$ after $e_{{0}}$ reached at $-0.3$ which led  to the actual trajectory continuing its natural motion given the initial conditions. Besides, the angular velocity  tracking error $\| \tilde{\omega} \|$ and the bias estimation error (Fig. \ref{fig:UW_Graf} c) and d), respectively) are smaller in  the proposed controller with a similar convergence time as in the first simulation. More importantly, the profit obtained by avoiding the unwinding phenomenon is the reduced energy as evidenced by  Fig. \ref{fig:UW_Graf}. e), where about $30\%$ of energy saving compared with the controller in  \cite{thienel2003coupled} was observed.  

To test the robustness of the proposed controller, in the third simulation gyro measurements and bias were contaminated by random  noise 
\begin{align}
\omega_{g} &= \omega + b + r_g, \\
\dot{b}&=r_b.
\end{align}
where $r_g = m_1 \bar{\nu}$ and $r_b = m_2 \bar{\nu}$, with $\nu \in \mathbb{R}^{3}$, $\bar{\nu}=\nu/\| \nu \|$  a zero-mean Gaussian  white noise with the covariance $0.5$,  and $0\leq m_1\leq 0.01$ and $0\leq  m_2\leq 0.03$ are  scaling factors with uniform distribution. 

Fig. \ref{fig:R_Graf}. e) shows the time-varying bias and the estimated bias. Notice that the estimated bias followed  closely the random trajectory of the true bias. In Fig. \ref{fig:R_Graf}. a), the quaternion error $e$ is displayed, showing how unwinding phenomenon is elude whereas quaternion error oscillates very close to $-\hat{1}$. Fig. \ref{fig:R_Graf}. b) illustrates the tracking error in both the norm $\| \tilde{\omega} \|_2$ and  the root mean square (RMS) as a function of time, i.e., $\| \tilde{\omega} \|_{rms} = \sqrt{ \frac{1}{t}\int_{0}^{t} \| \tilde{\omega} \|^{2} dt}$. The graphic of $\| \tilde{\omega} \|_{rms}$ indicates that the tracking error $\tilde{\omega}$ remained  close to $0$ for all $t\geq 0$ since $\| \tilde{\omega} \|_{rms}$ is decreasing exponentially. Moreover, a similar behavior can be seen for $\| b - \hat{b} \|_{rms}$ in Fig. \ref{fig:R_Graf}. c). By last, Fig. \ref{fig:R_Graf}. d) shows the control effort which is practically the same as the case  when gyro bias was constant. 

\begin{figure}
	\begin{center}
		\includegraphics[scale=0.3]{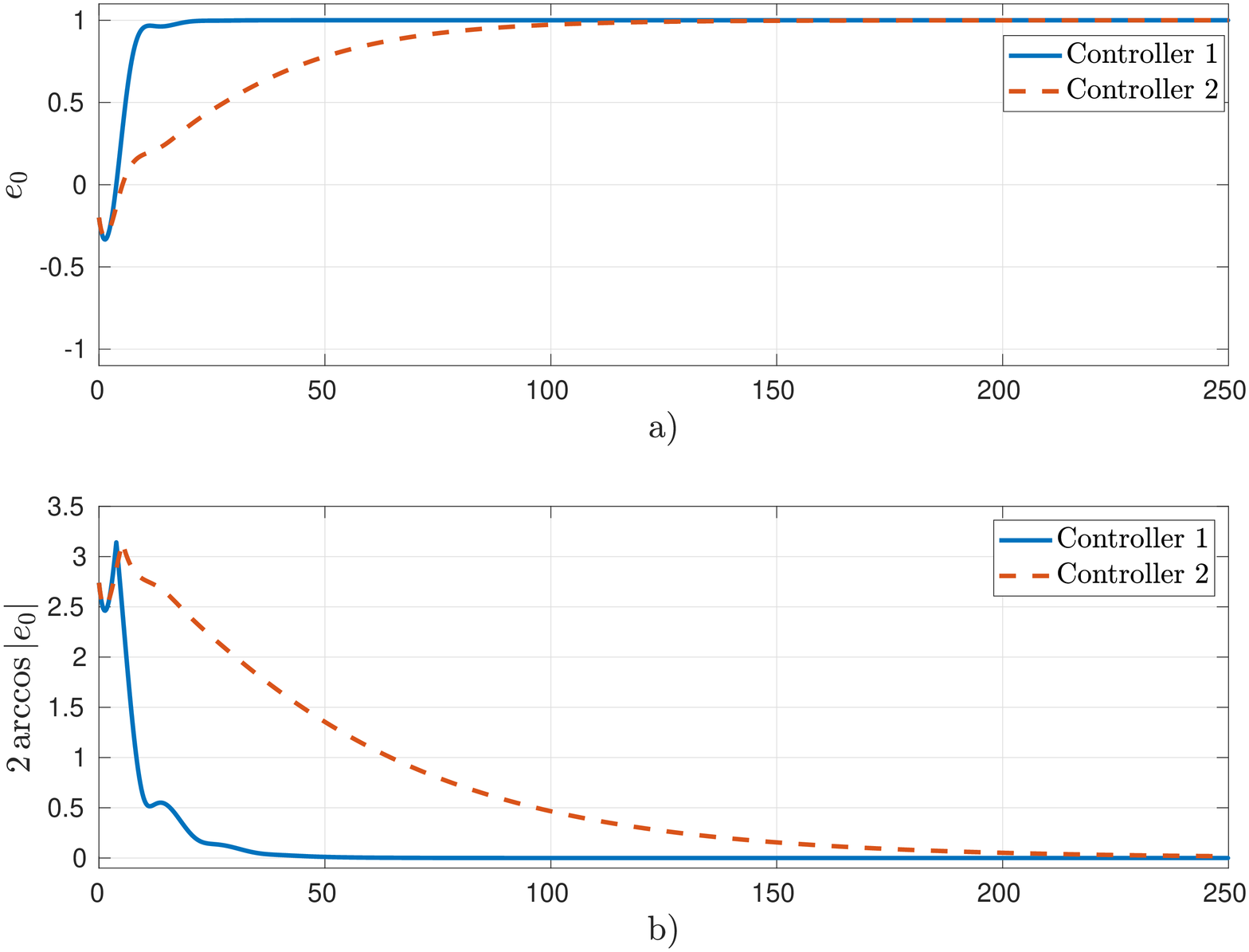}
		\includegraphics[scale=0.3]{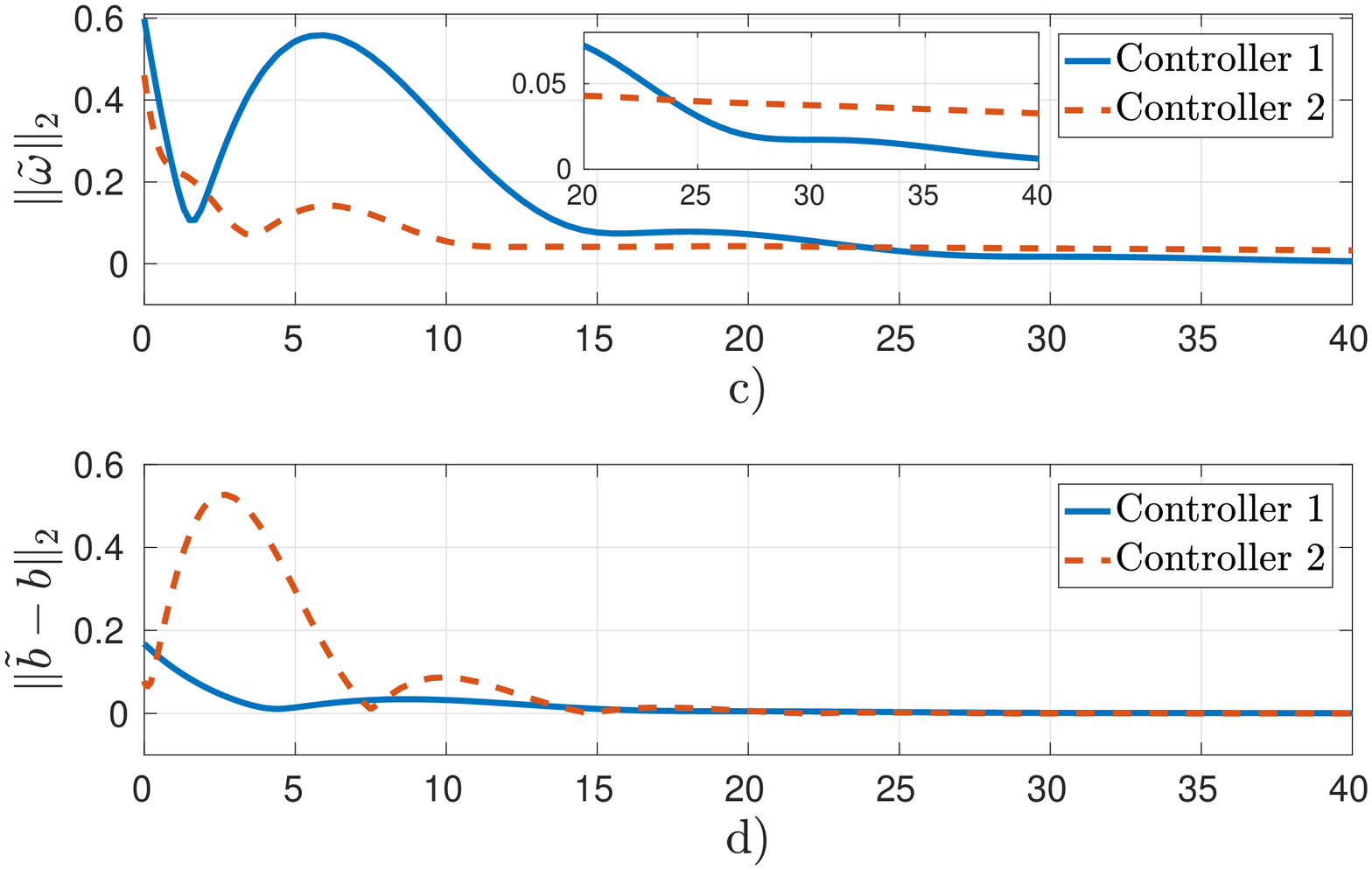}
		\includegraphics[scale=0.3]{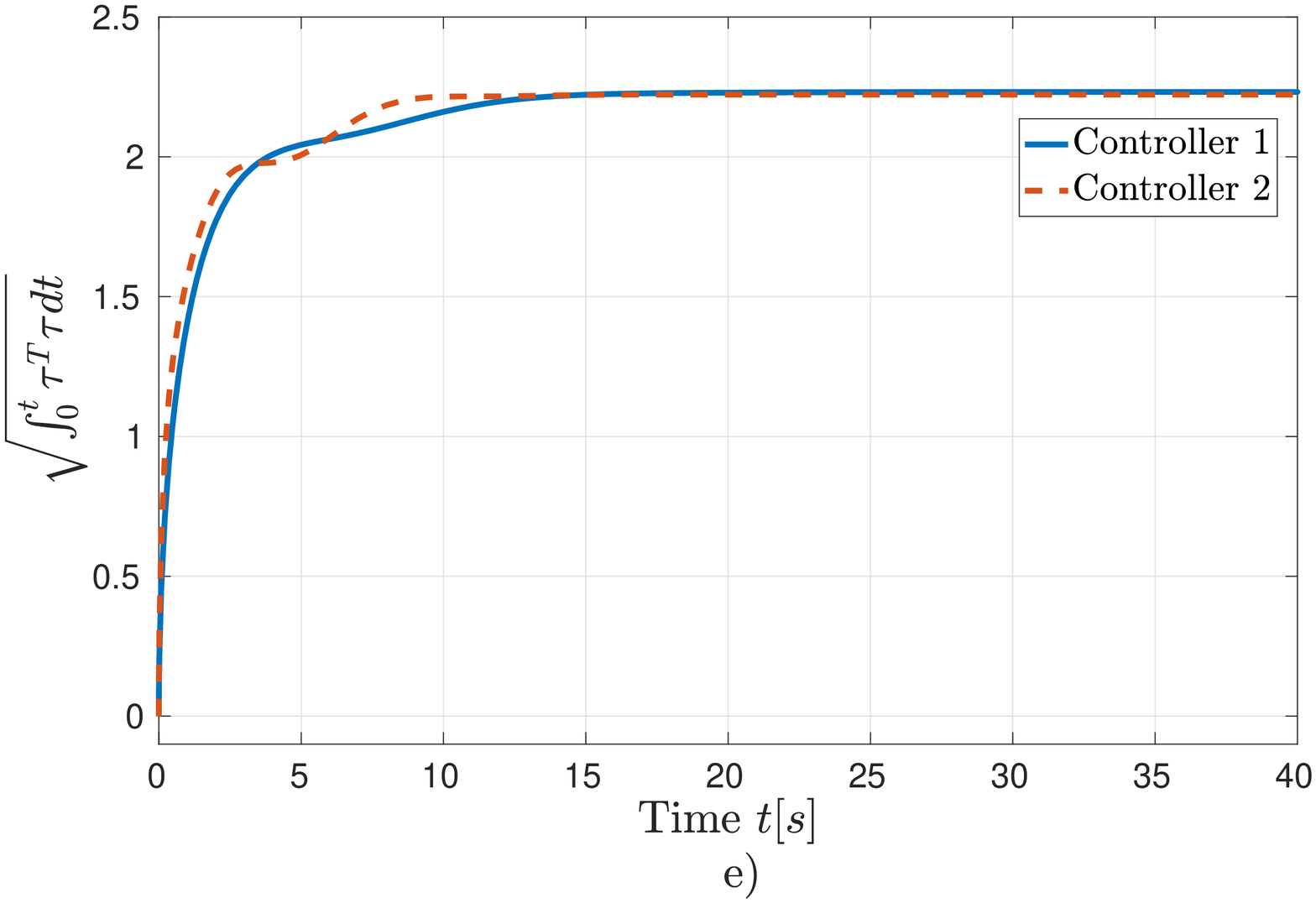}
		\caption{Performances of proposed controller (controller 1) for $\delta = 1$ and controller reported in  \cite{thienel2003coupled} (controller 2) under the  same profile of energy.}
		\label{fig:ID_Graf}
	\end{center}
\end{figure}

\begin{figure}
	\begin{center}
		\includegraphics[scale=0.3]{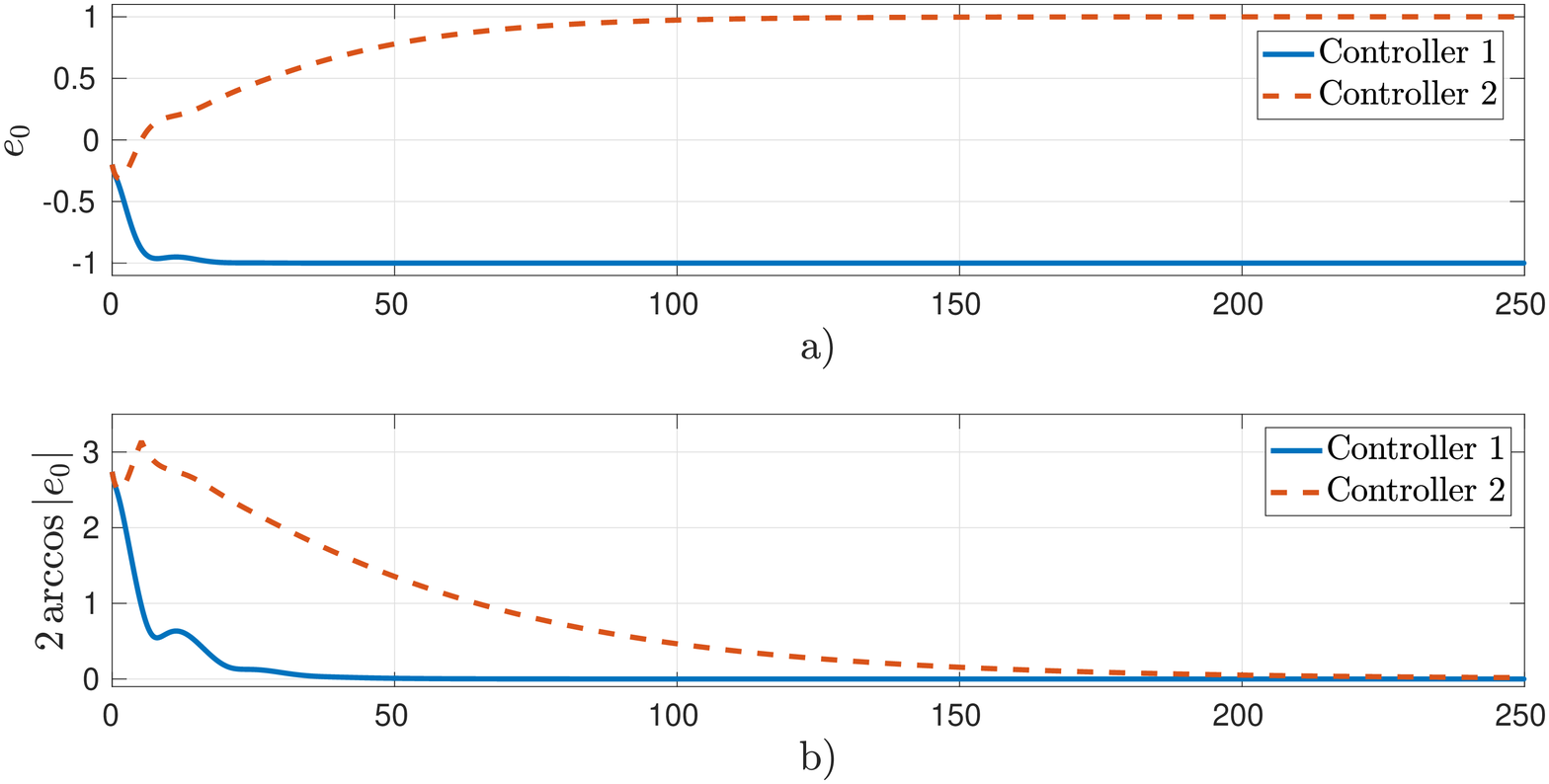}
		\includegraphics[scale=0.3]{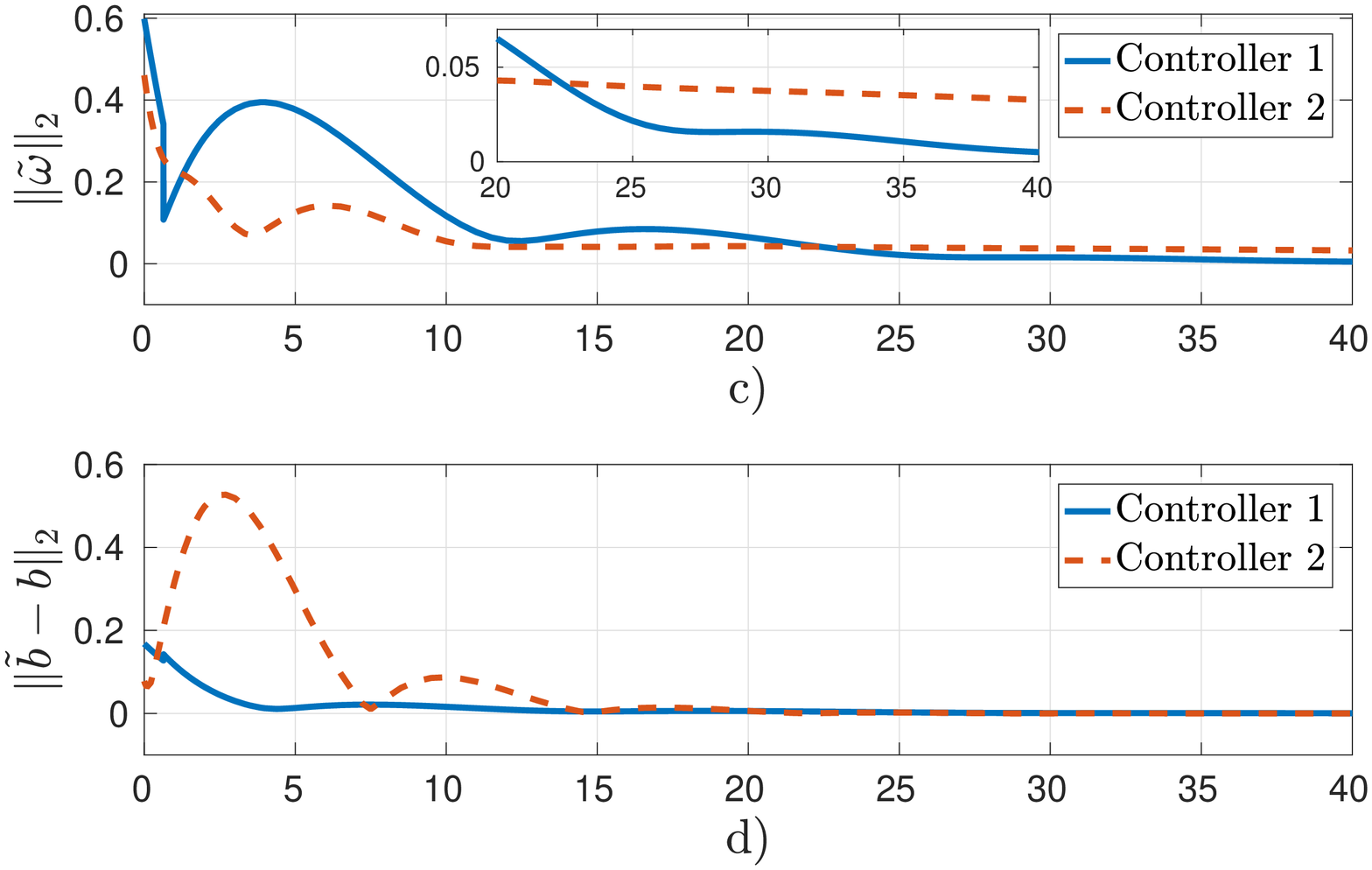}
		\includegraphics[scale=0.3]{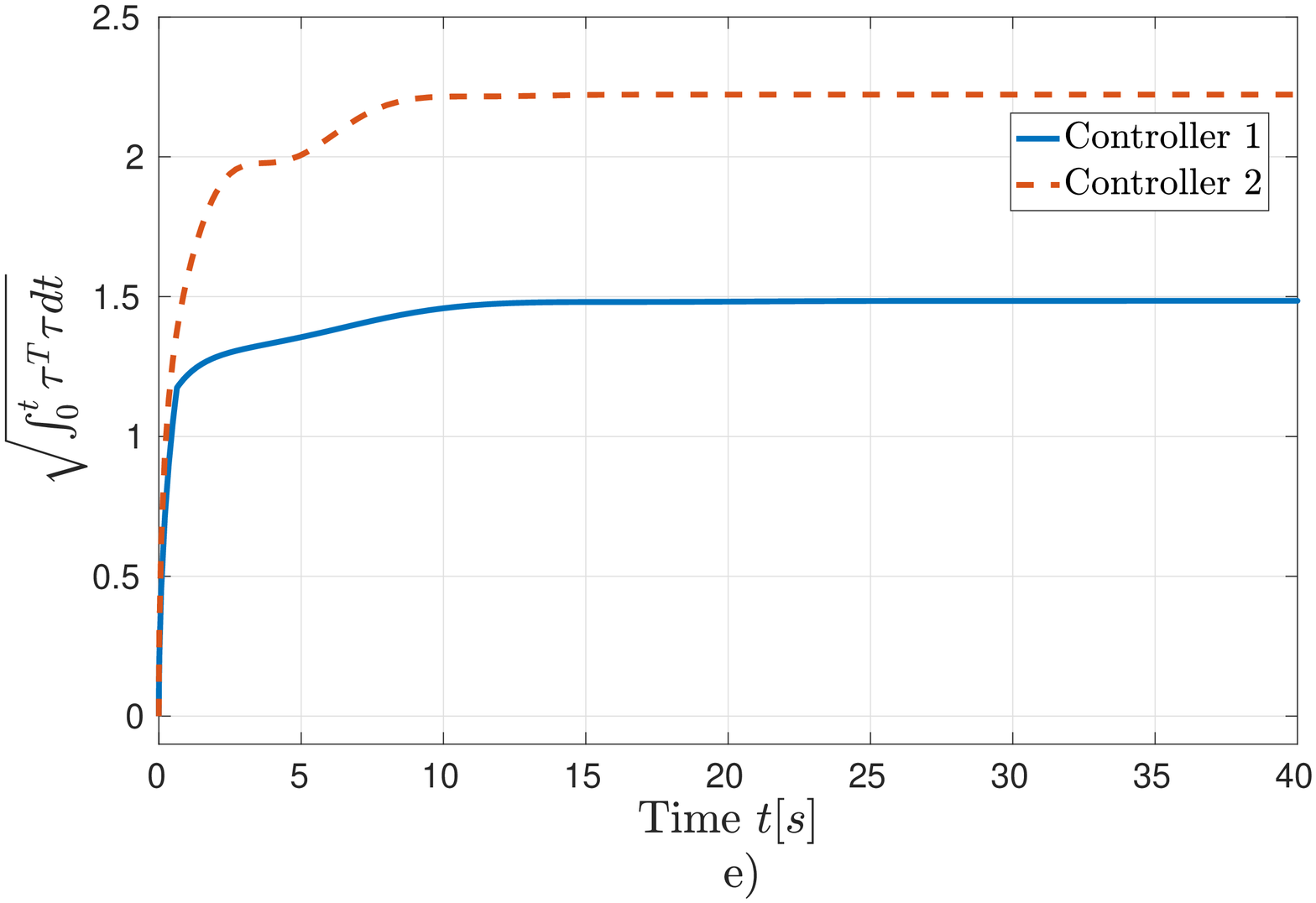}
		\caption{Behavior of the proposed controller (Controller 1) with $\delta = 0.3$ and the controller reported in  \cite{thienel2003coupled} (controller 2) when unwinding phenomenon was presented.}
		\label{fig:UW_Graf}
	\end{center}
\end{figure}

\begin{figure}
	\begin{center}
		\includegraphics[scale=0.3]{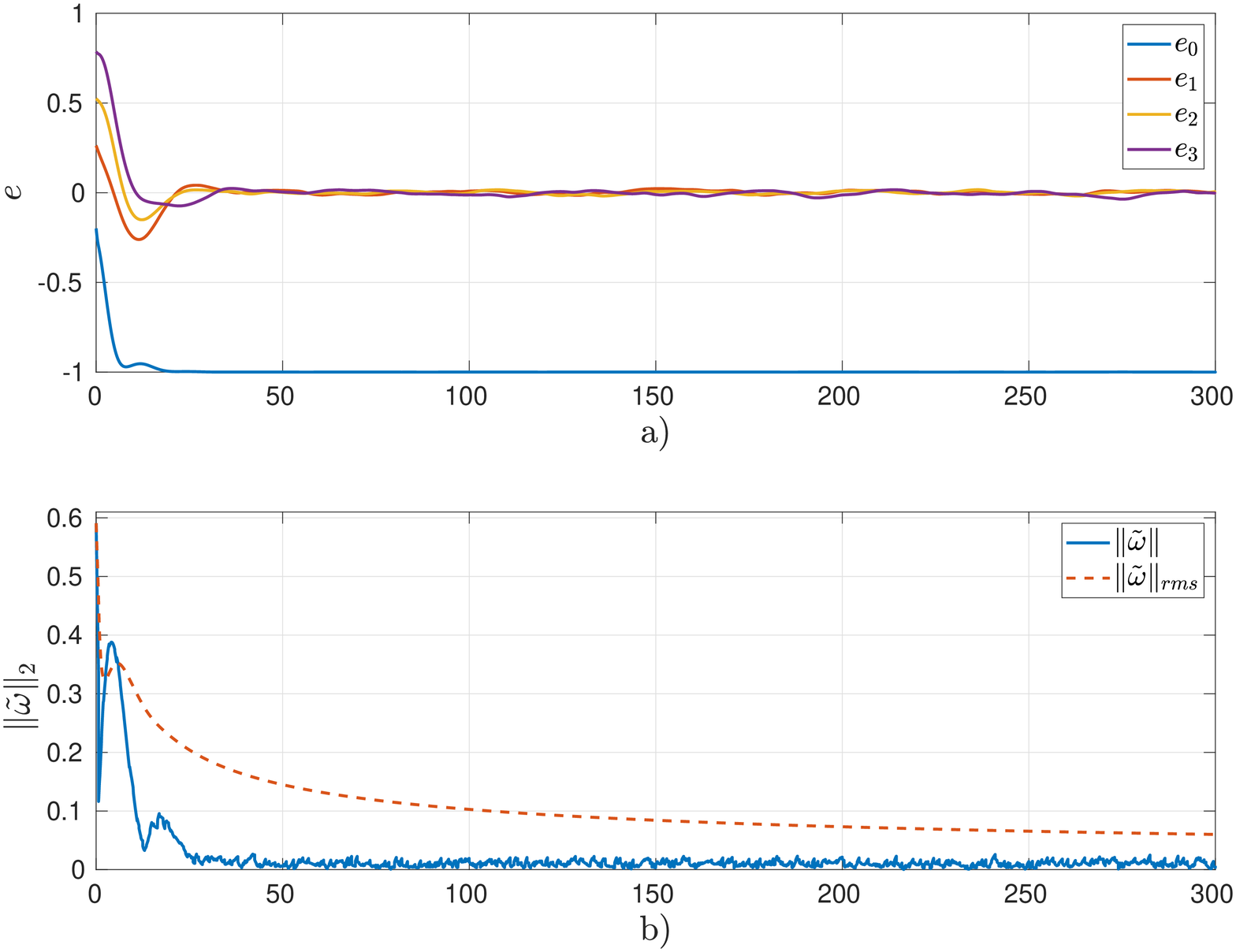}
		\includegraphics[scale=0.3]{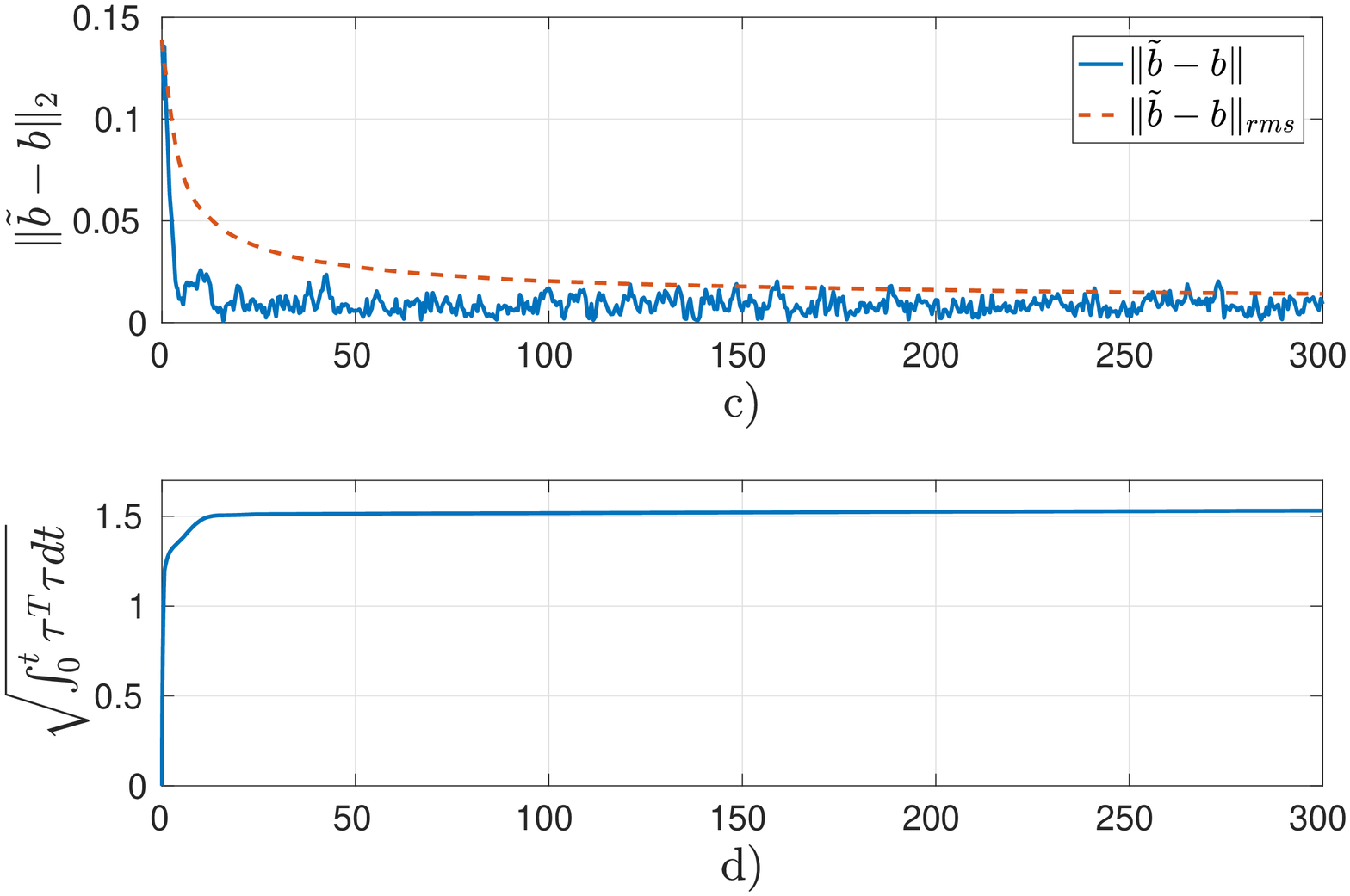}
		\includegraphics[scale=0.3]{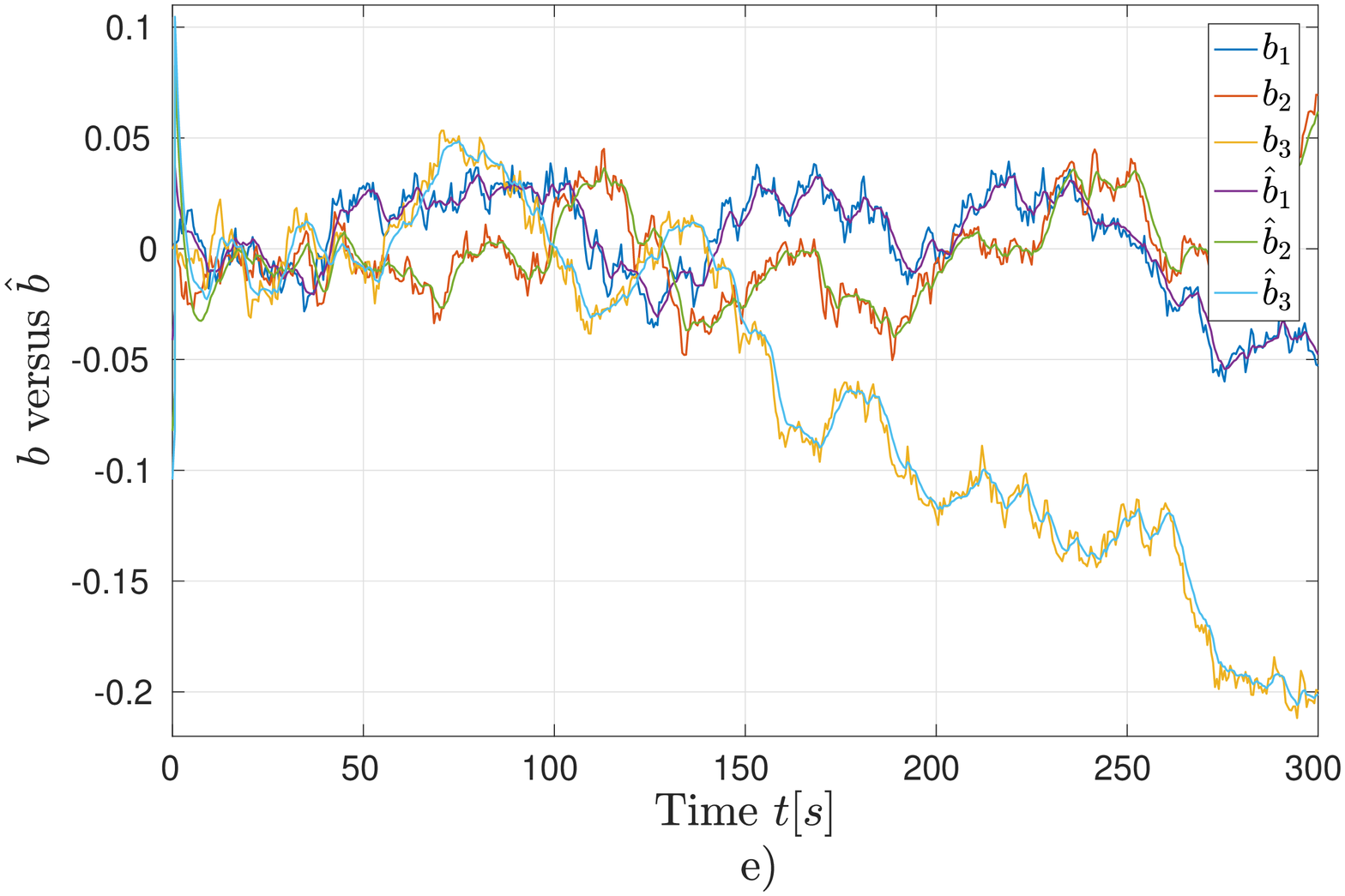}
		\caption{Performance of the proposed  controller  \eqref{eq:OFeedB_UF}-\eqref{eq:OmRpE_UF} for $\delta = 0.3$ in the presence of noisy gyro measurements and time-varying bias.}
		\label{fig:R_Graf}
	\end{center}
\end{figure}

\section{Conclusions}\label{Sec:Conc}
This paper has presented a global exponential attitude controller with gyro bias estimation for a spacecraft using measurements from a low-cost gyroscope. By using contraction analysis, a nonlinear gyro bias observer was designed with exponential convergence. This observer was modified to develop a controller  with global exponential convergence. The global result was achieved by means of representing the tracking error by its quaternion logarithm. Furthermore, by incorporating  a hysteretically switching variable into  in the previous controller, a unwinding-free exponentially convergent switching controller with gyro bias estimation was developed to address energy-efficiency and robustness with respect to the measurement noise. 
%The strong exponential convergence nature makes the proposed controller more energy-efficient and robust to noisy gyros measurements  and time-varying bias as showed in the simulations. 
Simulations were carried out to illustrate the main features of the proposed controller.

The main disadvantage of the proposed controller is that it has no reduction property, i.e. in the regulation case ($\omega_d =0$) the control law does not reduce to a simple quaternion feedback. In addition, parameters knowledge is necessary to implement the control law. These issues will be considered for future investigation along this line.

\section*{Acknowledgment}                             % Place acknowledgements
This work was supported in part by CONACyT under grant 253677 and by PAPIIT-UNAM IN113418, and carried out in the National Laboratory of Automobile and Aerospace Engineering LN-INGEA.  % here.

\bibliographystyle{elsarticle-num} 

% Loading bibliography database
\bibliography{arxiv1}

%\bibliographystyle{unsrt}
%\bibliography{references}  %%% Remove comment to use the external .bib file (using bibtex).
%%% and comment out the ``thebibliography'' section.

%%% Comment out this section when you \bibliography{references} is enabled.
%\begin{thebibliography}{1}
%
%	\bibitem{kour2014real}
%	George Kour and Raid Saabne.
%	\newblock Real-time segmentation of on-line handwritten arabic script.
%	\newblock In {\em Frontiers in Handwriting Recognition (ICFHR), 2014 14th
%			International Conference on}, pages 417--422. IEEE, 2014.
%
%	\bibitem{kour2014fast}
%	George Kour and Raid Saabne.
%	\newblock Fast classification of handwritten on-line arabic characters.
%	\newblock In {\em Soft Computing and Pattern Recognition (SoCPaR), 2014 6th
%			International Conference of}, pages 312--318. IEEE, 2014.
%
%	\bibitem{hadash2018estimate}
%	Guy Hadash, Einat Kermany, Boaz Carmeli, Ofer Lavi, George Kour, and Alon
%	Jacovi.
%	\newblock Estimate and replace: A novel approach to integrating deep neural
%	networks with existing applications.
%	\newblock {\em arXiv preprint arXiv:1804.09028}, 2018.
%
%\end{thebibliography}

\end{document}